\documentclass[11pt,a4paper]{amsart}
\usepackage{a4wide}
\usepackage{filecontents} 
 
\usepackage{amsthm}
\usepackage{enumerate}
\usepackage{amssymb}
\usepackage{url,doi}
\usepackage{tikz}

\DeclareGraphicsRule{.tif}{png}{.png}{`convert #1 `dirname #1`/`basename #1 .tif`.png}

\def\AntiFerroIsing#1{\textsc{AntiFerroIsing$_#1$}}
\def\AntiFerroIsingd#1#2{\textsc{AntiFerroIsing$_#1$}($\ensuremath{#2}$)}
\def\SAT{\mathrm{SAT}}
\def\nSAT{\mathrm{\#SAT}}
\def\nBIS{\mathrm{\#BIS}}
\def\RHPi{\mathrm{\#RH}\Pi_1}
\def\bfS{\mathbf{S}}
\def\listHcol{\listXcol{H}}
\def\listHcold{\listXcold{H}{\ensuremath{\Delta}}}
\def\IBPlistHcol{\listHcol}
\def\RPIlistHcol{\listHcol}
\def\listXcol#1{\textsc{\#List-$#1$-Col}}
\def\listXcold#1#2{\textsc{\#List-$#1$-Col}($\ensuremath{#2}$)}
\def\Xcol#1{\textsc{\#$#1$-Col}}
\def\Xcold#1#2{\textsc{\#$#1$-Col}($\ensuremath{#2}$)}
\def\Hcol{\Xcol{H}}
\def\Hcold{\Xcold{H}{\Delta}}
\def\pnsat{\textsc{\#1p1nSat}}

\def\bfx{\mathbf{x}}

\def\submat#1#2#3#4#5{#1_{(#2,#3),(#4,#5)}}
\def\IMP{\mathrm{IMP}}
\def\Ptime{\mathrm{FP}}    
\def\numP{\#\mathrm{P}}
\def\NP{\mathrm{NP}}
\def\RP{\mathrm{RP}}

\def\firstLetH{Suppose that $H$ is a connected undirected graph (possibly with loops). }
\def\LetH{Suppose that $H$ is a connected undirected graph. }

\tikzset{lab/.style={circle,draw,inner sep=0pt,fill=none,minimum size=5mm}}
\tikzset{every picture/.style={line width=0.8pt}}
\tikzset{empty/.style={rectangle,draw=none,fill=none}}

\newtheorem{theorem}{Theorem}
\newtheorem{lemma}[theorem]{Lemma}
\newtheorem{corollary}[theorem]{Corollary}
\newtheorem{definition}[theorem]{Definition}
\theoremstyle{remark}
\newtheorem*{remarks}{Remarks}
\newtheorem*{remark}{Remark}

\begin{document}

\title{A complexity trichotomy for approximately counting list $H$-colourings$^\ddag$$^\S$}

\thanks{$\ddag$ These results were presented in preliminary form in the proceedings of \textit{ICALP 2016} (Track A)}
\author{Andreas Galanis$^{*}$}\thanks{$*$ The research leading to these results 
has received funding from the European Research Council under 
the European Union's Seventh Framework Programme (FP7/2007-2013) 
ERC grant agreement no.\ 334828. The paper reflects only the authors' views 
and not the views of the ERC or the European Commission. 
The European Union is not liable for any use that may be made of the information contained therein.}
\address{Andreas Galanis, Department of Computer Science, University of Oxford, 
Wolfson Building, Parks Road, Oxford, OX1~3QD, UK.}
\author{Leslie Ann Goldberg$^{*}$}
\address{Leslie Ann Goldberg, Department of Computer Science, University of Oxford, 
Wolfson Building, Parks Road, Oxford, OX1~3QD, UK.}
\author{Mark Jerrum$^{\dag}$}\thanks{$\dag$ This work was partially supported by the EPSRC grant EP/N004221/1.}
\address{Mark Jerrum, School of Mathematical Sciences\\
Queen Mary, University of London, Mile End Road, London E1 4NS, United Kingdom.}
\thanks{$\S$ This work was done in part while the authors were visiting the Simons Institute for the Theory of Computing.}
\date{January 5, 2017}
            
\begin{abstract}
We examine the computational complexity of 
approximately counting the list $H$-colourings of a graph. 
We
discover a natural graph-theoretic trichotomy based on the structure of the
graph~$H$. 
If $H$ is an irreflexive bipartite graph or a reflexive complete graph then  
counting list $H$-colourings is trivially in polynomial time.
Otherwise, if $H$ is an irreflexive bipartite permutation graph or a reflexive proper interval graph then 
approximately counting list $H$-colourings is equivalent to $\nBIS$,  the  
problem of approximately counting independent sets in a bipartite graph. This is a well-studied problem which is believed to be of intermediate complexity -- it is believed that it does not have an FPRAS, but
that it is not as difficult as approximating the most difficult counting problems in $\numP$.
For every other graph~$H$, approximately counting list $H$-colourings is complete for $\numP$ with respect to
approximation-preserving reductions (so there is no FPRAS unless $\NP=\RP$). 
Two pleasing features of
the trichotomy are 
(i)  it has a natural formulation in terms of hereditary graph classes, and (ii) the proof is largely self-contained and does not require any universal algebra (unlike similar dichotomies in the weighted case).
We are able to extend the hardness results to the bounded-degree setting,
showing that all hardness results apply to input graphs with 
maximum degree at most~$6$.
 \end{abstract}

\begin{filecontents}{\jobname.bib}

@inproceedings{DJM,
  author    = {Martin Dyer and
               Mark Jerrum and
               Haiko M{\"{u}}ller},
  title     = {On the switch {M}arkov chain for perfect matchings},
  booktitle = {Proceedings of the Twenty-Seventh Annual {ACM-SIAM} Symposium on Discrete
               Algorithms, {SODA} 2016, Arlington, VA, USA, January 10-12, 2016},
  pages     = {1972--1983},
  year      = {2016},
  url       = {http://dx.doi.org/10.1137/1.9781611974331.ch138},
  doi       = {10.1137/1.9781611974331.ch138}
}

@article{Saluja,
  author    = {Sanjeev Saluja and
               K. V. Subrahmanyam and
               Madhukar N. Thakur},
  title     = {Descriptive Complexity of {{\#}P} Functions},
  journal   = {J. Comput. Syst. Sci.},
  volume    = {50},
  number    = {3},
  pages     = {493--505},
  year      = {1995},
  url       = {http://dx.doi.org/10.1006/jcss.1995.1039},
  doi       = {10.1006/jcss.1995.1039},
  timestamp = {Tue, 05 Jul 2011 11:11:02 +0200},
  biburl    = {http://dblp.uni-trier.de/rec/bib/journals/jcss/SalujaST95},
  bibsource = {dblp computer science bibliography, http://dblp.org}
}

@inproceedings{Sly,
  author    = {Allan Sly},
  title     = {Computational Transition at the Uniqueness Threshold},
  booktitle = {51th Annual {IEEE} Symposium on Foundations of Computer Science, {FOCS}
               2010, October 23-26, 2010, Las Vegas, Nevada, {USA}},
  pages     = {287--296},
  year      = {2010}, 
  url       = {http://dx.doi.org/10.1109/FOCS.2010.34},
  doi       = {10.1109/FOCS.2010.34},
  timestamp = {Tue, 16 Dec 2014 09:57:21 +0100},
  biburl    = {http://dblp.uni-trier.de/rec/bib/conf/focs/Sly10},
  bibsource = {dblp computer science bibliography, http://dblp.org}
}

@article{CGGGJSV,
title = "{{\#}BIS}-hardness for 2-spin systems on bipartite bounded degree graphs in the tree non-uniqueness region ",
journal = "Journal of Computer and System Sciences ",
volume = "82",
number = "5",
pages = "690--711",
year = "2016",
note = "",
issn = "0022-0000",
doi = "http://dx.doi.org/10.1016/j.jcss.2015.11.009",
url = "http://www.sciencedirect.com/science/article/pii/S0022000015001324",
author = "Jin-Yi Cai and Andreas Galanis and Leslie Ann Goldberg and Heng Guo and Mark Jerrum and Daniel \v{S}tefankovi\v{c} and Eric Vigoda"
}

@article{SlySun,
author    = {Allan Sly and Nike Sun},
title     = {Counting in two-spin models on {$d$}-regular graphs},
volume    = {42},
number    = {6},
pages     = {2383--2416},
year      = {2013},
journal   = {Ann. Probab.}
}

@article{Garey,
title = "Some simplified {NP}-complete graph problems ",
journal = "Theoretical Computer Science ",
volume = "1",
number = "3",
pages = "237 - 267",
year = "1976",
note = "",
issn = "0304-3975",
doi = "http://dx.doi.org/10.1016/0304-3975(76)90059-1",
url = "http://www.sciencedirect.com/science/article/pii/0304397576900591",
author = "M.R. Garey and D.S. Johnson and L. Stockmeyer"
}

@article{GSV,
title = "Inapproximability of the Partition Function for the Antiferromagnetic Ising and Hard-Core Models",
journal = "Combinatorics, Probability and Computing",
doi = "10.1017/S0963548315000401",
publisher = "Cambridge University Press",
address = "Cambridge, UK",
author = "Andreas Galanis and Daniel \v{S}tefankovi\v{c} and Eric Vigoda",
volume = "25",
number = "4",
pages = "500-559",
year = "2016"
}

@inproceedings{Weitz,
  author    = {Dror Weitz},
  title     = {Counting independent sets up to the tree threshold},
  booktitle = {Proceedings of the 38th Annual {ACM} Symposium on Theory of Computing,
               Seattle, WA, USA, May 21-23, 2006},
  pages     = {140--149},
  year      = {2006}, 
  url       = {http://doi.acm.org/10.1145/1132516.1132538},
  doi       = {10.1145/1132516.1132538},
  timestamp = {Thu, 28 Sep 2006 09:38:26 +0200},
  biburl    = {http://dblp.uni-trier.de/rec/bib/conf/stoc/Weitz06},
  bibsource = {dblp computer science bibliography, http://dblp.org}
}

@inproceedings{LiLuYinEarlier,
  author    = {Liang Li and
               Pinyan Lu and
               Yitong Yin},
  title     = {Approximate counting via correlation decay in spin systems},
  booktitle = {Proceedings of the Twenty-Third Annual {ACM-SIAM} Symposium on Discrete
               Algorithms, {SODA} 2012, Kyoto, Japan, January 17-19, 2012},
  pages     = {922--940},
  year      = {2012}, 
  url       = {http://portal.acm.org/citation.cfm?id=2095190&CFID=63838676&CFTOKEN=79617016},
  timestamp = {Wed, 12 Feb 2014 17:08:16 +0100},
  biburl    = {http://dblp.uni-trier.de/rec/bib/conf/soda/LiLY12},
  bibsource = {dblp computer science bibliography, http://dblp.org}
}

@article{VV,
  author    = {Leslie G. Valiant and
               Vijay V. Vazirani},
  title     = {{NP} is as Easy as Detecting Unique Solutions},
  journal   = {Theor. Comput. Sci.},
  volume    = {47},
  number    = {3},
  pages     = {85--93},
  year      = {1986},
  url       = {http://dx.doi.org/10.1016/0304-3975(86)90135-0},
  doi       = {10.1016/0304-3975(86)90135-0},
  timestamp = {Wed, 07 Sep 2011 12:13:21 +0200},
  biburl    = {http://dblp.uni-trier.de/rec/bib/journals/tcs/ValiantV86},
  bibsource = {dblp computer science bibliography, http://dblp.org}
}

@article{FHH,
  author    = {Tom{\'{a}}s Feder and
               Pavol Hell and
               Jing Huang},
  title     = {List Homomorphisms and Circular Arc Graphs},
  journal   = {Combinatorica},
  volume    = {19},
  number    = {4},
  pages     = {487--505},
  year      = {1999},
  url       = {http://dx.doi.org/10.1007/s004939970003},
  doi       = {10.1007/s004939970003},
  timestamp = {Fri, 20 May 2011 14:40:19 +0200},
  biburl    = {http://dblp.uni-trier.de/rec/bib/journals/combinatorica/FederHH99},
  bibsource = {dblp computer science bibliography, http://dblp.org}
}

@article{FH,
title = "List Homomorphisms to Reflexive Graphs ",
journal = "Journal of Combinatorial Theory, Series B ",
volume = "72",
number = "2",
pages = "236 - 250",
year = "1998",
note = "",
issn = "0095-8956",
doi = "http://dx.doi.org/10.1006/jctb.1997.1812",
url = "http://www.sciencedirect.com/science/article/pii/S0095895697918128",
author = "Tomas Feder and Pavol Hell"
}

@book {BrandstadtLeSpinrad,
    AUTHOR = {Brandst{\"a}dt, Andreas and Le, Van Bang and Spinrad, Jeremy
              P.},
     TITLE = {Graph classes: a survey},
    SERIES = {SIAM Monographs on Discrete Mathematics and Applications},
 PUBLISHER = {Society for Industrial and Applied Mathematics (SIAM),
              Philadelphia, PA},
      YEAR = {1999},
     PAGES = {xii+304},
      ISBN = {0-89871-432-X},
   MRCLASS = {05-01 (68R10)},
  MRNUMBER = {1686154 (2001h:05001)},
MRREVIEWER = {Chinh T. Hoang},
       DOI = {10.1137/1.9780898719796},
       URL = {http://dx.doi.org/10.1137/1.9780898719796},
}
	
@article {LSM,
    AUTHOR = {Bulatov, Andrei A. and Dyer, Martin and Goldberg, Leslie Ann
              and Jerrum, Mark and McQuillan, Colin},
     TITLE = {The expressibility of functions on the {B}oolean domain, with
              applications to counting {CSP}s},
   JOURNAL = {J. ACM},
  FJOURNAL = {Journal of the ACM},
    VOLUME = {60},
      YEAR = {2013},
    NUMBER = {5},
     PAGES = {Art. 32, 36},
      ISSN = {0004-5411},
   MRCLASS = {68Q15 (08A02)},
  MRNUMBER = {3124681},
MRREVIEWER = {Martin C. Cooper},
       DOI = {10.1145/2528401},
       URL = {http://dx.doi.org/10.1145/2528401},
}

@article {ApproxCSP,
    AUTHOR = {Chen, Xi and Dyer, Martin and Goldberg, Leslie Ann and Jerrum,
              Mark and Lu, Pinyan and McQuillan, Colin and Richerby, David},
     TITLE = {The complexity of approximating conservative counting {CSP}s},
   JOURNAL = {J. Comput. System Sci.},
  FJOURNAL = {Journal of Computer and System Sciences},
    VOLUME = {81},
      YEAR = {2015},
    NUMBER = {1},
     PAGES = {311--329},
      ISSN = {0022-0000},
   MRCLASS = {68Q25 (68T20 68W25)},
  MRNUMBER = {3260230},
       DOI = {10.1016/j.jcss.2014.06.006},
       URL = {http://dx.doi.org/10.1016/j.jcss.2014.06.006},
}

@article {DG00,
    AUTHOR = {Dyer, Martin and Greenhill, Catherine},
     TITLE = {The complexity of counting graph homomorphisms},
   JOURNAL = {Random Structures Algorithms},
  FJOURNAL = {Random Structures \& Algorithms},
    VOLUME = {17},
      YEAR = {2000},
    NUMBER = {3-4},
     PAGES = {260--289},
      ISSN = {1042-9832},
   MRCLASS = {05C85 (05A15 05C15 68Q25)},
  MRNUMBER = {1801135 (2002a:05230)},
MRREVIEWER = {Mark R. Jerrum},
       DOI = {10.1002/1098-2418(200010/12)17:3/4<260::AID-RSA5>3.3.CO;2-N},
       URL = {http://dx.doi.org/10.1002/1098-2418(200010/12)17:3/4<260::AID-RSA5>3.3.CO;2-N},
}

@article{DGGJ,
author="Dyer, Martin and Goldberg, Leslie Ann and Greenhill, Catherine
and Jerrum, Mark", 
title="The Relative Complexity of Approximate Counting Problems",
journal="Algorithmica ",
year="2003",
volume="38",
number="3",
pages="471--500",
issn="1432-0541",
doi="10.1007/s00453-003-1073-y",
url="http://dx.doi.org/10.1007/s00453-003-1073-y"
}

@article {FHH03,
    AUTHOR = {Feder, Tomas and Hell, Pavol and Huang, Jing},
     TITLE = {Bi-arc graphs and the complexity of list homomorphisms},
   JOURNAL = {J. Graph Theory},
  FJOURNAL = {Journal of Graph Theory},
    VOLUME = {42},
      YEAR = {2003},
    NUMBER = {1},
     PAGES = {61--80},
      ISSN = {0364-9024},
     CODEN = {JGTHDO},
   MRCLASS = {05C15 (05C62 05C85 68Q25 68R10)},
  MRNUMBER = {1943106 (2003i:05048)},
MRREVIEWER = {Richard C. Brewster},
       DOI = {10.1002/jgt.10073},
       URL = {http://dx.doi.org/10.1002/jgt.10073},
}

@article{Nolife,
author = {Andreas Galanis and Leslie Ann Goldberg and Mark Jerrum},
title = {Approximately Counting {$H$}-Colourings is {\#}{BIS}-Hard},
journal = {SIAM Journal on Computing},
volume = {45},
number = {3},
pages = {680-711},
year = {2016},
doi = {10.1137/15M1020551}
}

@incollection {GallaiTrans,
    AUTHOR = {Gallai, Tibor},
     TITLE = {A translation of {T}. {G}allai's paper: ``{T}ransitiv
              orientierbare {G}raphen'' [{A}cta {M}ath. {A}cad. {S}ci.
              {H}ungar. {\bf 18} (1967), 25--66;  {MR}0221974 (36 \#5026)]},
 BOOKTITLE = {Perfect graphs},
    SERIES = {Wiley-Intersci. Ser. Discrete Math. Optim.},
     PAGES = {25--66},
      NOTE = {Translated from the German and with a foreword by
              Fr{\'e}d{\'e}ric Maffray and Myriam Preissmann},
 PUBLISHER = {Wiley, Chichester},
      YEAR = {2001},
   MRCLASS = {05A17 (01A60 01A75)},
  MRNUMBER = {1861357},
}
		
@article {Gallai,
    AUTHOR = {Gallai, T.},
     TITLE = {Transitiv orientierbare {G}raphen},
   JOURNAL = {Acta Math. Acad. Sci. Hungar},
  FJOURNAL = {Acta Mathematica Academiae Scientiarum Hungaricae},
    VOLUME = {18},
      YEAR = {1967},
     PAGES = {25--66},
      ISSN = {0001-5954},
   MRCLASS = {05.60},
  MRNUMBER = {0221974 (36 \#5026)},
MRREVIEWER = {W. Moser},
}

@article {Two-spin,
    AUTHOR = {Goldberg, Leslie Ann and Jerrum, Mark and Paterson, Mike},
     TITLE = {The computational complexity of two-state spin systems},
   JOURNAL = {Random Structures Algorithms},
  FJOURNAL = {Random Structures \& Algorithms},
    VOLUME = {23},
      YEAR = {2003},
    NUMBER = {2},
     PAGES = {133--154},
      ISSN = {1042-9832},
   MRCLASS = {82B20 (68Q17 68Q25 68W25)},
  MRNUMBER = {1995687 (2004f:82007)},
MRREVIEWER = {Sylvie Corteel},
       DOI = {10.1002/rsa.10090},
       URL = {http://dx.doi.org/10.1002/rsa.10090},
}

@article{PNAS,
author = {Goldberg, Leslie Ann and Jerrum, Mark}, 
title = {A complexity classification of spin systems with an external field},
volume = {112}, 
number = {43}, 
pages = {13161-13166}, 
year = {2015}, 
doi = {10.1073/pnas.1505664112}, 
URL = {http://www.pnas.org/content/112/43/13161.abstract}, 
eprint = {http://www.pnas.org/content/112/43/13161.full.pdf}, 
journal = {Proceedings of the National Academy of Sciences} 
}

@article {HellHuang,
    AUTHOR = {Hell, Pavol and Huang, Jing},
     TITLE = {Interval bigraphs and circular arc graphs},
   JOURNAL = {J. Graph Theory},
  FJOURNAL = {Journal of Graph Theory},
    VOLUME = {46},
      YEAR = {2004},
    NUMBER = {4},
     PAGES = {313--327},
      ISSN = {0364-9024},
     CODEN = {JGTHDO},
   MRCLASS = {05C69 (05C62)},
  MRNUMBER = {2071482 (2005g:05110)},
MRREVIEWER = {J. Richard Lundgren},
       DOI = {10.1002/jgt.20006},
       URL = {http://dx.doi.org/10.1002/jgt.20006},
}

@article {Jackowski,
    AUTHOR = {Jackowski, Zygmunt},
     TITLE = {A new characterization of proper interval graphs},
   JOURNAL = {Discrete Math.},
  FJOURNAL = {Discrete Mathematics},
    VOLUME = {105},
      YEAR = {1992},
    NUMBER = {1-3},
     PAGES = {103--109},
      ISSN = {0012-365X},
     CODEN = {DSMHA4},
   MRCLASS = {05C75 (05C10)},
  MRNUMBER = {1180196 (93h:05133)},
MRREVIEWER = {Alan Tucker},
       DOI = {10.1016/0012-365X(92)90135-3},
       URL = {http://dx.doi.org/10.1016/0012-365X(92)90135-3},
}

@article {JS93,
    AUTHOR = {Jerrum, Mark and Sinclair, Alistair},
     TITLE = {Polynomial-time approximation algorithms for the {I}sing
              model},
   JOURNAL = {SIAM J. Comput.},
  FJOURNAL = {SIAM Journal on Computing},
    VOLUME = {22},
      YEAR = {1993},
    NUMBER = {5},
     PAGES = {1087--1116},
      ISSN = {0097-5397},
     CODEN = {SMJCAT},
   MRCLASS = {82B20 (60J10 60K35 82B80)},
  MRNUMBER = {1237164 (94g:82007)},
       DOI = {10.1137/0222066},
       URL = {http://dx.doi.org/10.1137/0222066},
}

@phdthesis{Kelk,
 author={Kelk, Steven},
 title={On the relative complexity of approximately counting $H$-colourings},
 school={Warwick University},
 year={2003}
}

@phdthesis{Koehler,
  author={K\"{o}hler, Ekkehard G.},
  title={Graphs without asteroidal triples},
  school={Technische Universit\"at Berlin},
  year={1999}
}

@inproceedings {LiLuYin,
    AUTHOR = {Li, Liang and Lu, Pinyan and Yin, Yitong},
     TITLE = {Correlation decay up to uniqueness in spin systems},
 BOOKTITLE = {Proceedings of the {T}wenty-{F}ourth {A}nnual {ACM}-{SIAM}
              {S}ymposium on {D}iscrete {A}lgorithms},
     PAGES = {67--84},
 PUBLISHER = {SIAM, Philadelphia, PA},
      YEAR = {2012},
   MRCLASS = {05C85 (68Q25)},
  MRNUMBER = {3185380},
      NOTE = {Full version available at \texttt{arXiv:1111.1764}},
}

@article {Mertzios,
    AUTHOR = {Mertzios, George B.},
     TITLE = {A matrix characterization of interval and proper interval
              graphs},
   JOURNAL = {Appl. Math. Lett.},
  FJOURNAL = {Applied Mathematics Letters. An International Journal of Rapid
              Publication},
    VOLUME = {21},
      YEAR = {2008},
    NUMBER = {4},
     PAGES = {332--337},
      ISSN = {0893-9659},
     CODEN = {AMLEEL},
   MRCLASS = {05C62 (05C50)},
  MRNUMBER = {2406509 (2009j:05168)},
       DOI = {10.1016/j.aml.2007.04.001},
       URL = {http://dx.doi.org/10.1016/j.aml.2007.04.001},
}

@phdthesis{Roberts,
  author       = {Roberts, Fred S.}, 
  title        = {Representations of indifference relations},
  school       = {Stanford University},
  year         = 1968,
  address      = {Stanford, CA}
}

@incollection {RobertsIndifference,
    AUTHOR = {Roberts, Fred S.},
     TITLE = {Indifference graphs},
 BOOKTITLE = {Proof {T}echniques in {G}raph {T}heory ({P}roc. {S}econd {A}nn
              {A}rbor {G}raph {T}heory {C}onf., {A}nn {A}rbor, {M}ich.,
              1968)},
     PAGES = {139--146},
 PUBLISHER = {Academic Press, New York},
      YEAR = {1969},
   MRCLASS = {05.40},
  MRNUMBER = {0252267},
MRREVIEWER = {A. J. Hoffman},
}

@article {SpinradEtAl,
    AUTHOR = {Spinrad, Jeremy and Brandst{\"a}dt, Andreas and Stewart,
              Lorna},
     TITLE = {Bipartite permutation graphs},
   JOURNAL = {Discrete Appl. Math.},
  FJOURNAL = {Discrete Applied Mathematics. The Journal of Combinatorial
              Algorithms, Informatics and Computational Sciences},
    VOLUME = {18},
      YEAR = {1987},
    NUMBER = {3},
     PAGES = {279--292},
      ISSN = {0166-218X},
     CODEN = {DAMADU},
   MRCLASS = {05C75 (68R10)},
  MRNUMBER = {917130 (89c:05064)},
MRREVIEWER = {E. Olaru},
       DOI = {10.1016/0166-218X(87)90064-3},
       URL = {http://dx.doi.org/10.1016/0166-218X(87)90064-3},
}

@article {Stockmeyer,
    AUTHOR = {Stockmeyer, Larry},
     TITLE = {On approximation algorithms for {$\#{\rm P}$}},
   JOURNAL = {SIAM J. Comput.},
  FJOURNAL = {SIAM Journal on Computing},
    VOLUME = {14},
      YEAR = {1985},
    NUMBER = {4},
     PAGES = {849--861},
      ISSN = {0097-5397},
     CODEN = {SMJCAT},
   MRCLASS = {68Q15},
  MRNUMBER = {807886 (87a:68067)},
       DOI = {10.1137/0214060},
       URL = {http://dx.doi.org/10.1137/0214060},
}

@phdthesis{Wegner,
  author       = {Wegner, Gerd}, 
  title        = {Eigenschaften der {N}erven homologisch-einfacher {F}amilien im $\mathbb{R}^n$},
  school       = {Universit\"at G\"ottingen},
  year         = 1967,
  address      = {G\"ottingen, Germany}
}

 \end{filecontents}
 
\maketitle

\section{Overview}\label{sec:intro}

In this paper we study the complexity of approximately counting 
the list $H$-colourings of a graph.
List $H$-colourings generalise $H$-colourings in the same way that 
list colourings generalise  proper vertex colourings.  
Fix an undirected graph $H$, which may have loops but not parallel edges.
Given a graph $G$, an {\it $H$-colouring of $G$} is a homomorphism from $G$ to~$H$ ---
that is, a mapping $\sigma:V(G)\to V(H)$
such that,
for all $u,v\in V(G)$, $\{u,v\}\in E(G)$ implies $\{\sigma(u),\sigma(v)\}\in E(H)$.  
If we identify the vertex set $V(H)$ with a set 
$Q=\{1,2,\ldots,q\}$ of ``colours'', then we can think of the mapping $\sigma$ 
as specifying a colouring of the vertices~$G$, and we can interpret the 
graph~$H$ as specifying the allowed colour adjacencies:  adjacent 
vertices in~$G$ can be assigned colours $i$ and $j$, if and only if 
vertices $i$ and~$j$ are adjacent in~$H$.  

 Now consider the graph $G$ together with a collection 
of sets $\bfS=\{S_v\subseteq Q:v\in V(G)\}$ specifying allowed colours 
at each of the vertices.  A {\it list $H$-colouring of $(G,\bfS)$} is 
an $H$-colouring $\sigma$ of~$G$ satisfying $\sigma(v)\in S_v$, for all $v\in V$. 
In the literature, the set~$S_v$ is referred to as the ``list'' of allowed colours at vertex~$v$, but
there is no implied ordering on the elements of~$S_v$ --- $S_v$ is just a set of allowed colours.

Suppose that $H$ is a \emph{reflexive} graph (i.e., a graph in which each vertex has a loop).
Feder and Hell~\cite{FH} considered the complexity of determining whether
a list $H$-colouring exists, given an input $(G,\bfS)$.
They showed that the problem is in~$\Ptime$ if $H$ is an interval graph, and that it is $\NP$-complete, otherwise.
Feder, Hell and Huang~\cite{FHH} studied the same problem in the case where $H$ is
\emph{irreflexive} (i.e., $H$ has no loops).
They showed that the problem is in~$\Ptime$ if $H$ is a circular arc graph of clique covering number two
(which is the same as being the complement of an interval bigraph~\cite{HellHuang}),
and that it is $\NP$-hard, otherwise. Finally, Feder, Hell and Huang~\cite{FHH03} 
generalised this result to obtain a dichotomy for all~$H$.
They introduced a new class of graphs, called bi-arc graphs, and showed that the problem is in~$\Ptime$ if $H$
is a bi-arc graph, and $\NP$-complete, otherwise.

We are concerned with the computational complexity of counting list $H$-colourings. 
Specifically  we are interested in how the complexity of the following 
computational problem depends on~$H$.   
\begin{description}
\item[Name] $\listHcol$.
\item[Instance]  A graph $G$   and a collection of colour sets 
$\bfS=\{S_v\subseteq Q:v\in V(G)\}$, where $Q=V(H)$.
\item[Output] The number of list $H$-colourings of $(G,\bfS)$.
\end{description}
Note that it is of no importance whether we allow or disallow loops in~$G$
--- a loop at vertex $v\in V(G)$ can be encoded within the set $S_v$ --- 
so we adopt the convention that $G$ is loop-free.  
As in the case of the decision problem,
$H$ is a parameter of the problem --- it does not form part of the problem instance.
Sometimes we obtain sharper results by introducing an additional parameter~$\Delta$,
which is an upper bound on the degrees of the vertices of~$G$.
\begin{description}
\item[Name] $\listHcold$.
\item[Instance]  A graph $G$  with maximum degree at most~$\Delta$ and a collection of colour sets 
$\bfS=\{S_v\subseteq Q:v\in V(G)\}$, where $Q=V(H)$.
\item[Output] The number of list $H$-colourings of $(G,\bfS)$.
\end{description}

Although $\listHcol$ and $\listHcold$ are the main objects of study in this paper, we occasionally need to 
discuss the more basic versions of these problems without lists.
\begin{description}
\item[Name] $\Hcol$.
\item[Instance]  A graph $G$.
\item[Output] The number of $H$-colourings of $G$.
\end{description}

\begin{description}
\item[Name] $\Hcold$.
\item[Instance]  A graph $G$  with maximum degree at most~$\Delta$.
\item[Output] The number of $H$-colourings of $G$.
\end{description}  

To illustrate the definitions, 
let $K'_2$ be the first graph illustrated in Figure~\ref{fig:2-wrench},
consisting of two connected vertices with  a loop on vertex~$2$.
$\Xcol{K_2'}$ 
is the problem of counting independent sets
in a graph since the vertices mapped to  colour~$1$ by any homomorphism form an independent set.
Let $K_3$ be the complete irreflexive graph on three vertices. Then $\Xcol{K_3}$  is the problem of counting  the proper 3-colourings
of a graph. $\Xcold{K_2'}{\Delta}$ and $\Xcold{K_3}{\Delta}$ are the corresponding problems where the instance is restricted to have maximum degree at most~$\Delta$.

The computational complexity of computing exact solutions to 
$\Hcol$ and
$\Hcold$
was  determined by Dyer and Greenhill~\cite{DG00}.
Dyer and Greenhill showed that 
$\Hcol$ is in $\Ptime$ if $H$ is a complete reflexive graph  
or a complete bipartite irreflexive graph, and 
$\Hcol$ is $\numP$-complete otherwise.
Their dichotomy also extends to the bounded-degree setting.
In particular, they showed that if $H$ is not a complete reflexive graph or 
a complete bipartite irreflexive graph then
there is an integer~$\Delta_H$ such that, for all $\Delta \geq \Delta_H$,
$\Hcold$ is $\numP$-complete.

Since the polynomial-time cases in Dyer and Greenhill's dichotomy clearly remain 
solvable in polynomial-time in the presence
of lists, their dichotomy for $\Hcol$ carries over to $\listHcol$  without change.  
In other words, there is no difference between the complexity of 
$\Hcol$ and $\listHcol$ as far as exact computation is concerned.
However, this situation changes    
if we consider approximate 
counting,
and this is the phenomenon that we explore in this paper.

With a view to reaching the statement of the main results as quickly as possible,
we defer precise definitions of the relevant concepts to Section~\ref{sec:prelim},
and provide only indications here.  From graph theory we import a couple of
well studied hereditary graph classes, namely
bipartite permutation graphs and proper interval graphs.  These classes each have 
several equivalent characterisations, and we give two of these, namely, excluded 
subgraph and matrix characterisations, in Section~\ref{sec:prelim}.  
It is sometimes useful
to restrict the definition of proper interval graphs to simple graphs.
However, in this paper, 
as in \cite{FH},
we consider reflexive proper interval graphs. 

From complexity theory we need the definitions of a Fully Polynomial Randomised 
Approximation Scheme (FPRAS), of approximation-preserving (AP-) reducibility,
and of the counting problems $\nSAT$ and $\nBIS$.  An FPRAS is a randomised algorithm that 
produces approximate solutions within specified relative error with high probability in 
polynomial time.  An AP-reduction from problem $\Pi$ to problem $\Pi'$ is a 
randomised Turing
reduction that yields close approximations to~$\Pi$ when provided with close approximations 
to~$\Pi'$.  It meshes with the definition of an FPRAS in the sense that the existence of an FPRAS for~$\Pi'$
implies the existence of an FPRAS for~$\Pi$.  
The problem of counting satisfying assignments 
of a Boolean formula is denoted by $\nSAT$. 
Every counting problem in $\numP$ is AP-reducible to $\nSAT$, so $\nSAT$ is said to be
complete for $\numP$ with respect to AP-reductions. It is known that there
is no FPRAS for $\nSAT$ unless $\RP=\NP$.
The problem of counting independent 
sets in a bipartite graph is denoted by $\nBIS$.  The problem $\nBIS$ appears to be
of intermediate complexity:  there is no known FPRAS for $\nBIS$ (and it is generally believed
that none exists) but there is no known AP-reduction from 
$\nSAT$ to $\nBIS$. Indeed, $\nBIS$ is complete with respect to AP-reductions for
a complexity class $\RHPi$ which will be discussed further in Section~\ref{sec:BISeasy}.

We say that a problem $\Pi$ is  
$\nSAT$-hard if there is an AP-reduction from~$\nSAT$ to~$\Pi$,
that it is
$\nSAT$-easy if there is an AP-reduction from~$\Pi$ to~$\nSAT$,
and that it is $\nSAT$-equivalent if both are true.
Note that all of these labels are about the difficulty of \emph{approximately} solving~$\Pi$, not about
the difficulty of exactly solving it.  
Similarly, $\Pi$ is said to be
$\nBIS$-hard if there is an AP-reduction from~$\nBIS$ to~$\Pi$,
$\nBIS$-easy if there is an AP-reduction from~$\Pi$ to~$\nBIS$,
 and
$\nBIS$-equivalent if there are both.

Our main result is 
a trichotomy for the complexity of approximating $\listHcol$.
\begin{theorem}\label{thm:mainunbounded} 
\firstLetH
\begin{enumerate}
\item[(i)] If $H$ is an irreflexive complete bipartite graph or
a reflexive complete graph then $\listHcol$ is in~$\Ptime$.
\item[(ii)] Otherwise, if $H$ is an irreflexive bipartite permutation
graph or a reflexive proper interval graph then $\listHcol$ 
is $\nBIS$-equivalent.
\item[(iii)] Otherwise,  $\listHcol$ is  $\nSAT$-equivalent.
\end{enumerate}
\end{theorem}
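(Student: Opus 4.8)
The plan is to prove the three cases separately, in each subdividing according to whether $H$ is reflexive, irreflexive, or partially reflexive. Case~(i) is by direct algorithms: if $H$ is a reflexive complete graph then every assignment $\sigma\colon V(G)\to V(H)$ with $\sigma(v)\in S_v$ for all~$v$ is a homomorphism, so the answer is $\prod_{v\in V(G)}|S_v|$; and if $H$ is an irreflexive complete bipartite graph with parts $A,B$ then one works component by component, a non-bipartite component of~$G$ contributing~$0$ and a bipartite component with bipartition $(X,Y)$ contributing $\prod_{v\in X}|S_v\cap A|\prod_{v\in Y}|S_v\cap B|+\prod_{v\in X}|S_v\cap B|\prod_{v\in Y}|S_v\cap A|$. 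The easiness half of case~(iii) is immediate: \listHcol\ lies in $\numP$ for each fixed~$H$, and every problem in $\numP$ is AP-reducible to $\nSAT$.

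The $\nBIS$-easiness half of case~(ii) is the one substantial easiness claim, and I would prove it by exhibiting \listHcol\ as a member of the logically-defined class $\RHPi$, which is AP-interreducible with $\nBIS$. Using the matrix characterisation of Section~\ref{sec:prelim}, order the colours $1,\dots,q$ so that, for a reflexive proper interval graph, each closed neighbourhood $N[i]=[\ell_i,r_i]$ is an interval with $\ell_i$ and $r_i$ non-decreasing in~$i$. Introduce, for each vertex $v$ of~$G$ and each colour~$i$, a Boolean variable $x_{v,i}$ intended to indicate $\sigma(v)\ge i$. Monotonicity of this ``thermometer'' encoding, membership $\sigma(v)\in S_v$, and --- using monotonicity of $\ell$ and $r$ --- the edge constraint $\sigma(u)\in N[\sigma(v)]$ can all be written as implications between pairs of these variables, so the number of satisfying assignments equals the number of list $H$-colourings of $(G,\bfS)$ and the resulting formula lies in $\RHPi$. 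The irreflexive bipartite permutation case is analogous, using the staircase biadjacency matrix.

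For the hardness directions I use repeatedly that lists may pin a vertex to any subset of $V(H)$, so $\listXcol{H'}$ AP-reduces to \listHcol\ whenever $H'$ is an induced subgraph of~$H$ (with its loops). This supplies the ``$\nBIS$-hard floor'' needed both for the lower bound of~(ii) and within~(iii): if $H$ is partially reflexive then some edge joins a looped and an unlooped vertex, giving an induced $K_2'$, and \Xcol{K_2'} is the problem of counting independent sets, which is $\nSAT$-hard by the inapproximability of the hard-core model~\cite{GSV,Sly}; if $H$ is reflexive and not complete then its underlying simple graph contains an induced $P_3$, giving an induced reflexive $P_3$; and if $H$ is connected, bipartite and not complete bipartite then it contains an induced $P_4$ (a connected bipartite graph with no induced $P_4$ being complete bipartite). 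Two short gadget reductions from $\nBIS$ then complete case~(ii): given a bipartite graph $(U\cup W,F)$, colour the $U$-vertices from $\{1,2\}$ and the $W$-vertices from $\{2,3\}$ in the reflexive path on $\{1,2,3\}$ --- so an edge forbids exactly the colour pair $(1,3)$ --- or from $\{1,3\}$ and $\{2,4\}$ respectively in $P_4$ --- so an edge forbids exactly $(1,4)$ --- and in either case the list $H'$-colourings are in bijection with the independent sets of $(U\cup W,F)$.

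It remains to upgrade ``$\nBIS$-hard'' to ``$\nSAT$-hard'' for every $H$ covered by~(iii). The partially reflexive case is done; otherwise $H$ is (b)~irreflexive and not bipartite, hence contains an induced odd cycle $C_{2k+1}$; or (c)~irreflexive, bipartite, but not a bipartite permutation graph, in which case the excluded-subgraph characterisation of Section~\ref{sec:prelim} exhibits an induced even cycle $C_{2k}$ with $k\ge3$ or one of finitely many forbidden trees; or (d)~reflexive but not a proper interval graph, in which case it contains one of the reflexive forbidden subgraphs --- the claw $K_{1,3}$, the cycles $C_k$ for $k\ge4$, the net, and the remaining small obstructions. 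For each such obstruction~$H'$ I would construct a gadget --- a graph with lists over $V(H')$ realising a prescribed pairwise interaction between two ``output'' vertices --- with which $\listXcol{H'}$ simulates an instance of a hard antiferromagnetic $2$-spin or Potts system in the tree non-uniqueness region, which has no FPRAS unless $\NP=\RP$~\cite{GSV,SlySun,CGGGJSV}; the odd girth, the claw, or the long induced hole of~$H'$ is exactly the feature that makes the simulated system sufficiently frustrated (in particular this covers $H'=K_3$, i.e.\ list $3$-colouring). This last step --- simultaneously controlling the (in part parameterised) families of obstructions and, for each, designing a list-gadget that implements a provably $\nSAT$-hard interaction --- is the main obstacle and is where essentially all the work lies; it is also the part that must be carried out with care for a degree bound if one wants the bounded-degree strengthening.
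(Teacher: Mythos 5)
Your high-level plan matches the paper's almost exactly: Part~(i) by explicit counting formulas; Part~(ii) easiness by a thermometer encoding of the staircase matrix characterisation into an $\RHPi$ formula (which is exactly what the paper does, via $\pnsat$); Part~(ii) hardness by extracting induced $P_3^*$ or $P_4$ with lists; and Part~(iii) by finding induced obstructions from the excluded-subgraph characterisations ($K_2'$, $X_3$, $X_2$, $T_2$, odd cycles, long even cycles, claw, net, $S_3$, reflexive cycles) and constructing list-gadgets over them that reduce from an antiferromagnetic model. The two $\nBIS$-hardness reductions onto $P_3^*$ and $P_4$, and the classification of partially-reflexive $H$ via an induced $K_2'$, are also exactly as in the paper.

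The genuine gap is the one you flag yourself at the end: ``designing a list-gadget that implements a provably $\nSAT$-hard interaction\ldots is where essentially all the work lies.'' That work is not sketched at all, and it is precisely the paper's main technical contribution. The paper's key idea is that every obstruction admits a \emph{path} gadget: a sequence of $2$-element lists $(\{i_1,j_1\},\dots,\{i_L,j_L\})$ along a path, chosen so that (a) each $2\times 2$ block of the adjacency matrix in the product $D' = \prod_k \submat{A}{i_k}{j_k}{i_{k+1}}{j_{k+1}}$ is upper- or lower-unitriangular, hence $\det D' = 1$; (b) the end pairs are swapped, $(i_L,j_L)=(j_1,i_1)$, so that after unswapping columns the terminal interaction matrix $D$ has $\det D = -1$, i.e.\ an antiferromagnetic sign; and (c) composing $\Gamma$ in parallel with its image $\Gamma^\pi$ under an order-two automorphism of~$H$ swapping the two terminal colours symmetrises $D$ to $\left(\begin{smallmatrix} a & b \\ b & a\end{smallmatrix}\right)$ with $a<b$, giving a direct AP-reduction from $\AntiFerroIsing{\lambda}$ with $\lambda = a/b$. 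Each of $X_3$, $X_2$, $T_2$, odd $C_q$, even $C_q$ ($q\ge 6$), the claw, the net, $S_3$, and reflexive $C_q$ ($q\ge 4$) then gets an explicit short path gadget. Without something like this determinant/automorphism mechanism, one is left (as you note) ``controlling the families of obstructions'' case by case with no uniform handle, and the claim is not proved. A minor further point: the reduction target throughout is $\AntiFerroIsing{\lambda}$ only; no Potts-type hardness is used or needed, and invoking $q$-state Potts non-uniqueness would require more care than citing~\cite{GSV,SlySun,CGGGJSV}. Also $X_2$ is not a tree, so ``finitely many forbidden trees'' misdescribes the bipartite-permutation obstructions.
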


\begin{remarks}
\begin{enumerate}
\item 
The assumption that $H$ is connected is made without loss of generality,  
since the complexity of $\listHcol$ is 
determined by the maximum complexity of $\listXcol{H'}$ over all 
connected components $H'$ of~$H$.   To see this,  suppose that $H$ 
has connected components $H_1,\ldots,H_k$.  We can reduce 
$\listXcol{H_i}$ to $\listHcol$ by using the lists to pick out 
the colours in $V(H_i)$.  So hardness results for $H_i$ translate to hardness
results for~$H$.  In the other direction, let $G_1,\ldots,G_m$ be the connected 
components of~$G$.  If we have algorithms for $\listXcol{H_i}$,
for $1\leq i\leq k$, then we can solve $\listXcol{H_i}$ for each instance~$G_j$ to obtain the solution $Z_{i,j}$. Then, to get a solution $Z$ to $\listHcol$
with input~$G$, we combine the solutions via 
$$ Z = \prod_{j=1}^m 
\sum_{i=1}^k Z_{i,j}.$$
\item
Part (ii) of Theorem~\ref{thm:mainunbounded} can be strengthened.  For the graphs~$H$ covered
by this part of the theorem, $\listHcol$ is actually complete for the complexity 
class $\RHPi$.  See Section~\ref{sec:BISeasy}.
\end{enumerate}
\end{remarks}

Theorem~\ref{thm:mainunbounded} also extends to the bounded-degree case.

\begin{theorem}\label{thm:main} 
\firstLetH
\begin{enumerate}
\item[(i)] If $H$ is an irreflexive complete bipartite graph or
a reflexive complete graph then, for all $\Delta$, $\listHcold$ is in~$\Ptime$.
\item[(ii)] Otherwise, if $H$ is an irreflexive bipartite permutation
graph or a reflexive proper interval graph then, for all $\Delta\geq 6$, $\listHcold$ 
is $\nBIS$-equivalent.
\item[(iii)] Otherwise, for all $\Delta\geq 6$, $\listHcold$ is  $\nSAT$-equivalent. Further, if $H$ is reflexive or irreflexive, $\listHcold$ is  $\nSAT$-equivalent for $\Delta\geq 3$.
\end{enumerate}
\end{theorem}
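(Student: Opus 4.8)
Part~(i) requires nothing new: the polynomial-time algorithms of Theorem~\ref{thm:mainunbounded}(i) for complete bipartite and complete reflexive~$H$ make no use of a degree bound, so they work for every~$\Delta$. The \emph{easiness} halves of parts~(ii) and~(iii) are inherited for free. Since $\listHcold$ is a restriction of $\listHcol$, the AP-reduction from $\listHcol$ to $\nBIS$ supplied by Theorem~\ref{thm:mainunbounded}(ii) restricts to an AP-reduction from $\listHcold$ to $\nBIS$; and since $\listHcold$ lies in $\numP$ and every problem in $\numP$ is AP-reducible to $\nSAT$, there is an AP-reduction from $\listHcold$ to $\nSAT$ for every~$\Delta$. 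The whole content of Theorem~\ref{thm:main} thus lies in the hardness directions: an AP-reduction from $\nBIS$ to $\listHcold$ in case~(ii) when $\Delta\ge 6$, and an AP-reduction from $\nSAT$ to $\listHcold$ in case~(iii) when $\Delta\ge 6$ in general and when $\Delta\ge 3$ if $H$ is reflexive or irreflexive.

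I would obtain these by re-examining the reductions that prove the hardness halves of Theorem~\ref{thm:mainunbounded} and bounding the maximum degree of the instances they build. Those reductions first use the forbidden-subgraph / matrix characterisations of bipartite permutation graphs and proper interval graphs --- together with the Dyer--Greenhill dichotomy~\cite{DG00} to peel off the complete cases --- to locate a bounded-size ``obstruction''~$H'$ as an induced subgraph of~$H$, and then reduce a known-hard problem to $\listXcol{H'}$. The passage from $\listXcol{H'}$ to $\listHcol$ uses only lists (restrict every list to~$V(H')$ and pin selected vertices with singleton lists), and this never increases the degree of the constructed graph. So it suffices to prove the bounded-degree hardness separately for each of the finitely many obstructions~$H'$.

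For case~(ii), each relevant obstruction, after pinning, realises a $0/1$ spin system on its colour set rich enough to simulate a two-spin system in its tree non-uniqueness region, and the latter is $\nBIS$-hard on \emph{bipartite bounded-degree} graphs by~\cite{CGGGJSV}; carrying this out with bounded-degree gadgets inside $\listXcol{H'}$ produces degree-$6$ instances, which is the source of the bound $\Delta\ge 6$. For case~(iii), when $H$ is mixed the obstruction again hosts a two-spin interaction of hard-core type, so the degree bound is again~$6$; here $6$ is best possible for the hard-core-type obstructions, since Weitz's correlation-decay algorithm~\cite{Weitz} gives an FPRAS for the hard-core model when $\Delta\le 5$, whereas the inapproximability at $\Delta\ge 6$ is imported from~\cite{Sly,SlySun,GSV}. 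When instead $H$ is purely reflexive or purely irreflexive, the obstruction carries enough structure to encode a constraint satisfaction problem directly: a Boolean variable becomes a vertex with a two-element ``free'' list (two non-adjacent loopy colours, or a non-adjacent pair inside one side of the bipartition) meeting at most three clause gadgets, each gadget and each wire being a fixed finite pattern. This reduces from $\nSAT$ presented as $3$-CNF with exactly three occurrences of each variable --- which is $\numP$-complete under AP-reductions, via count-preserving chains of biconditionals that make fresh copies of each variable --- and hence needs only $\Delta\ge 3$.

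The main obstacle is the gadgetry underlying the previous paragraph: for \emph{every} obstruction~$H'$ one must exhibit a bounded-degree interaction-or-clause gadget together with the \emph{exact} multiplicative relation between its list-colouring count and the target partition function (or satisfying-assignment count), so that the reduction is genuinely approximation-preserving and no instance-dependent factor leaks in. The awkward obstructions are those that are not a single edge-type interaction --- long even cycles~$C_{2k}$ with $k\ge 3$ and the bipartite-claw / net / sun configurations --- where a naive gadget has larger internal degree; for these one must either engineer a leaner low-degree realisation or argue that a cheaper obstruction is always present in~$H$. A last point to verify is that each $\nSAT$-hardness extracted really is an AP-reduction \emph{from} $\nSAT$, not merely NP-hardness of approximation: for the degree-$3$ cases this is immediate from the direct constraint-satisfaction encoding, while for the degree-$6$ cases it relies on phrasing the two-spin inapproximability of~\cite{Sly,SlySun,GSV,CGGGJSV} as AP-reductions.
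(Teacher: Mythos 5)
Your high-level decomposition is the paper's: part~(i) is trivial, the easiness halves of~(ii) and~(iii) carry over to every~$\Delta$ for free from Theorem~\ref{thm:mainunbounded}, and the hardness content reduces, via lists and the excluded-subgraph characterisations, to a finite stock of obstruction graphs, with the degree of the constructed instances never increasing when one restricts lists to $V(H')$. Your treatment of the $\Delta\ge 6$ cases also matches the paper in substance: for mixed~$H$ an induced~$K'_2$ gives degree-$6$ independent-set hardness via~\cite{Sly}, and for the $\nBIS$-hard cases an induced $P_4$ or $P_3^*$ gives degree-$6$ $\nBIS$-hardness via \cite[Cor.~3]{CGGGJSV}; these are exactly Lemmas~\ref{lem:(ir)reflexive}, \ref{lem:BIShardWidom} and~\ref{lem:BIShard}.

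The genuine gap is the $\Delta\ge 3$ claim for purely reflexive or purely irreflexive~$H$. You propose a direct degree-$3$ CSP encoding, representing a Boolean variable by a two-element list of non-adjacent loopy colours or a non-adjacent pair within one side of a bipartition, and reducing from bounded-occurrence $3$-CNF. As stated this cannot cover all obstructions: the odd cycle $C_3=K_3$ is on the excluded-subgraph list for bipartite permutation graphs and has no non-adjacent pair, no bipartition, and no loops, so neither of your variable encodings applies to it; and you supply no clause gadgets or count accounting even for the cases you do flag as awkward. The paper (Lemmas~\ref{lem:BPbounded} and~\ref{lem:PIbounded}) instead keeps the antiferromagnetic-Ising reduction from Lemmas~\ref{lem:BP} and~\ref{lem:PI}, but replaces the parallel-path gadget~$\Gamma^*$ by an inductively ``thickened'' gadget~$\Gamma^*_t$: two copies of~$\Gamma^*_{t-1}$ placed in parallel with terminals identified, plus two fresh buffer vertices carrying carefully chosen two-element lists, which become the new degree-$1$ terminals while every internal vertex keeps degree at most~$3$. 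The interaction matrix of~$\Gamma^*_t$ is that of~$\Gamma^*$ with entries raised to the power~$2^t$, so the realised Ising parameter~$\lambda_t$ is driven below $(\Delta-2)/\Delta$, where \AntiFerroIsingd{\lambda_t}{\Delta}\ is $\nSAT$-hard by~\cite{GSV} and Lemma~\ref{lem:zero}, for every $\Delta\ge 3$. The only structural fact this requires of each obstruction is condition~\eqref{eq:condH}, which is checked case by case and, in particular, holds for~$K_3$. That single uniform device, not per-obstruction clause gadgets, is the ingredient your sketch is missing.
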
 
\begin{remarks}

\begin{enumerate}
\item  The condition $\Delta\geq 6$ is necessary for any hardness result that holds for all graphs $H$.
In particular, there is a graph~$H$ that is not an irreflexive complete bipartite graph or a reflexive complete graph
but for which $\listHcold$ has an FPTAS.
An example is the graph $H=K'_2$ for which   Weitz's self-avoiding walk algorithm \cite{Weitz} gives an FPTAS for
$\listHcold$ for $\Delta \leq 5$.  
\item In general, the lowest value of the degree bound $\Delta$ such that $\listHcold$ is computationally hard depends on the particular graph $H$.  Theorem~\ref{thm:main} leaves open the cases $\Delta=3,4,5$ (partly).
\end{enumerate}
\end{remarks}

Theorems~\ref{thm:mainunbounded}
and~\ref{thm:main}   follow from various constituent results, scattered throughout 
the paper. 

\begin{proof}[Proof of Theorems~\ref{thm:mainunbounded} and~\ref{thm:main}]
Part (i) is trivial.
Part (ii) follows from Lemmas \ref{lem:BIShard} and \ref{lem:BISeasy}.
Part (iii) follows from Lemmas \ref{lem:(ir)reflexive}, \ref{lem:BP}, \ref{lem:BPbounded}, \ref{lem:PI},  and~\ref{lem:PIbounded}.
\end{proof}

The most obvious issue raised by  our theorems is the computational 
complexity of approximately counting $H$-colourings
(in the absence of lists).  This question was extensively studied by
Kelk~\cite{Kelk} and others, and appears much harder to resolve, even when there
are no degree bounds.
It is known~\cite{NoLife} that 
$\Hcol$ is $\nBIS$-hard  for every connected undirected graph~$H$ that
is neither an irreflexive bipartite permutation graph nor a reflexive proper interval graph.
It is not known for which connected~$H$ the problem is $\nBIS$-easy 
and for which it is $\nSAT$-equivalent, and whether one or the other always holds.
In fact, there are 
specific graphs $H$, two of them with as few as four vertices, 
for which the complexity of $\Hcol$ is unresolved.
It is far from clear that a trichotomy should be expected, 
and in fact there may exist an infinite sequences $(H_t)$ of graphs for which 
$\Xcol{H_t}$ is reducible to $\Xcol{H_{t+1}}$ but not vice versa.  Some partial 
results and speculations can be found in~\cite{Kelk}.  

{
\begin{figure}[t]
\begin{center}
\begin{tikzpicture}[xscale=1,yscale=1]
\tikzset{every loop/.style={ in=60, out=120, looseness =7}}

\begin{scope}[shift={(-3,0)}]
\draw (0,0) node[lab] (1) {$1$} ++ (1,0) node[lab] (2) {$2$};
\draw (1) -- (2);
\draw (2) edge [loop above]  (2);
\end{scope}

\begin{scope}
\draw (0,0) node[lab] (1) {$1$} ++ (1,0) node[lab] (2) {$2$} ++ (0.9,0.6) node[lab] (3) {$3$} 
   ++ (0,-1.2) node[lab] (4) {$4$};
\draw (1) -- (2) -- (3);
\draw (2) -- (4);
\draw (2) edge [loop]  (2);
\draw (3) edge [loop above]  (3);
\draw (4) edge [loop below]  (4);
\end{scope}

\begin{scope}[shift={(4,0)}]
\draw (0,0) node[lab] (1) {$1$} ++ (1,0) node[lab] (2) {$2$} ++ (1,0) node[lab] (3) {$3$};
\draw (1) -- (2) -- (3);
\draw (1) edge [loop above]  (1);
\draw (2) edge [loop above]  (2);
\draw (3) edge [loop above]  (3);
\end{scope}

\end{tikzpicture}
\end{center}

\caption{$K_2'$, $2$-wrench and $P_3^*$}
\label{fig:2-wrench}
\end{figure}
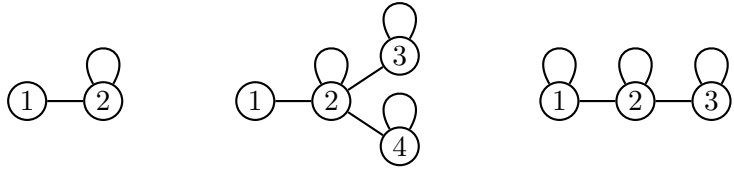
}

As we noted, $\Hcol$ and $\listHcol$ have the same complexity as regards   
exact computation.
However, for approximate computation they are 
different, assuming (as is widely believed) that  
there is no AP-reduction from $\nSAT$ to $\nBIS$.
An example 
is provided by the 2-wrench (see Figure~\ref{fig:2-wrench}).
It is known \cite[Theorem 21]{DGGJ} that  $\Xcol{\text{2-wrench}}$ is $\nBIS$-equivalent, but we know 
from Theorem~\ref{thm:mainunbounded} that the list version $\listXcol{\text{2-wrench}}$ is $\nSAT$-equivalent
since the 2-wrench is neither irreflexive nor reflexive.
One way to see that $\listXcol{\text{2-wrench}}$ is $\nSAT$-equivalent
is  to note 
 that the 2-wrench contains $K_2'$ as an induced subgraph, 
 and that this induced subgraph
 can be ``extracted'' using the list constraints $S_v=\{1,2\}$, for all $v\in V(G)$.
 But $\listXcol{K_2'}$ is already known to be $\nSAT$-equivalent~\cite[Theorem 1]{DGGJ}.
 Indeed, systematic techniques for extracting hard induced subgraphs form the
main theme of the paper.  It is for this reason that the theory of hereditary
graph classes comes into play, just as in~\cite{FHH03}.

Another recent research 
direction, at least in the unbounded-degree case,
 is towards weighted versions of list colouring.
Here, the graph~$H$ is augmented by edge-weights, specifying for each pair 
of colours $i,j$, the cost of assigning 
$i$ and~$j$ to adjacent vertices in~$G$.  
The computational complexity of obtaining approximate 
solutions was studied by Chen, Dyer, Goldberg, 
Jerrum, Lu, McQuillan and Richerby~\cite{ApproxCSP},
and by Goldberg and Jerrum~\cite{PNAS}.  
There is a trichotomy for the case in which the input
has no degree bound, but this is obtained
in a context where {\it individual\/} spins at vertices are weighted 
and not just the interactions between {\it pairs\/} of adjacent spins.  
In this paper we have restricted the class of problems under consideration to
ones having 0,1-weights on interactions, but at the same time we have 
restricted the problem instances to ones having 0,1-weights on individual spins.  
So we have a different tradeoff and the results from the references that we have just discussed
do not carry across, even in the unbounded-degree setting.  Indeed,
towards the end of the paper, in Section~\ref{sec:counterexample},
we give an example to show that Theorem~\ref{thm:mainunbounded} 
is not simply the restriction of earlier results to 0,1-interactions  
(not merely because the proofs differ, but, in a stronger sense, because the
results themselves are different).

Two things are appealing about  our theorems.
First, unlike the weighted classification theorems~\cite{ApproxCSP}, 
 here the truth
is pleasingly simple. The trichotomies for $\listHcol$  and $\listHcold$ have a simple,
natural formulation in terms of hereditary graph classes.
Second, the proofs of the theorems are largely self-contained.
The proofs do not rely on earlier works such as~\cite{ApproxCSP}, which 
require multimorphisms and other deep results from universal algebra.
The proof of Theorem~\ref{thm:mainunbounded} is self-contained apart from some 
 very elementary and well-known starting points, which are 
collected together in Lemma~\ref{lem:zero}.
The proof of Theorem~\ref{thm:main} is similarly self-contained, though it 
additionally relies on recent results~\cite{Sly, GSV}  about approximating the partition function
of the anti-ferromagnetic Ising model on bounded degree graphs (these are also
contained in Lemma~\ref{lem:zero}).

\section{Complexity- and graph-theoretic preliminaries}\label{sec:prelim}
As the complexity of computing exact solutions of $\listHcol$ is well understood,
we focus on the complexity of computing approximations.  
The framework for this 
has already been explained in many papers,
so we provide an informal 
description only here and direct the reader to Dyer, Goldberg, Greenhill 
and Jerrum~\cite{DGGJ} for precise definitions.

The standard notion of efficient approximation algorithm is that of a {\it Fully Polynomial
Randomised Approximation Scheme\/} (or FPRAS).  This is a randomised algorithm 
that is required to produce a solution within 
relative error specified by a tolerance $\varepsilon>0$, in time polynomial in the 
instance size and $\varepsilon^{-1}$.  Evidence for the non-existence of 
an FPRAS for a problem $\Pi$ can be obtained through {\it Approximation-Preserving\/} 
(or AP-) {\it reductions}.  These are randomised polynomial-time Turing reductions that 
preserve (closely enough) the error tolerance.   The set of problems 
that have an FPRAS is closed under AP-reducibility.  

Every problem in $\numP$ is AP-reducible to $\nSAT$, so $\nSAT$ is complete for $\numP$ with respect
to AP-reductions. The same is true of the counting version of any $\NP$-complete decision problem.
It is known that these problems do not have an FPRAS unless $\RP=\NP$. On the other hand, 
using the bisection technique of Valiant and Vazirani 
\cite[Corollary 3.6]{VV}, we know that $\nSAT$ can be approximated (in the FPRAS sense) by a polynomial-time probabilistic
Turing machine equipped with an oracle for the decision problem $\SAT$. 

In the statements and proofs of  our theorems, we refer to two 
hereditary graph classes.  A class of undirected graphs is said to 
be {\it hereditary} if it is closed under taking induced subgraphs.
The classes of bipartite permutation graphs and proper interval graphs have 
been widely studied and many equivalent characterisations of them are 
known.  We are concerned with the excluded subgraph and matrix characterisations.
These characterisations are well known in the area of structural graph theory,
and most can be found in some form in the {\it Information System on Graph Classes 
and their Inclusions\/} (ISGCI) at \texttt{www.graphclasses.org}.  
However, it is not always easy for someone from outside the area
to make the required connections, so we show in an appendix (Section~\ref{sec:app})
how to derive the results we use from the published literature.  Refer to the appendix also for 
proper citations.

The classes of {\it bipartite permutation graphs\/} and {\it proper interval graphs\/}
are defined by certain intersection models, and have 
the following excluded subgraph characterisations.
A graph is a bipartite permutation graph
if and only if it contains none of the following
as an induced subgraph:  $X_3$, $X_2$, $T_2$ or a cycle $C_\ell$ 
of length $\ell$ not equal to four.  (Refer to Figure~\ref{fig:excludedBP}
for specifications of $X_3$, $X_2$ and $T_2$.)   
A graph is a proper interval graph if and only if it contains none of the following
as an induced subgraph:  the claw, the net, $S_3$ or a cycle $C_\ell$ 
of length $\ell$ at least four.  (Refer to Figure~\ref{fig:excludedPI}
for specifications of the claw, the net and $S_3$.)  

These two graph classes also have matrix characterisations.  
Say that a 0,1-matrix 
$A=(A_{i,j}:1\leq i\leq n,1\leq j\leq m)$ 
has 
{\it staircase form\/} if the 1s in each row are contiguous 
and the following condition is satisfied:  letting $\alpha_i=\min\{j:A_{i,j}=1\}$
and $\beta_i=\max\{j:A_{i,j}=1\}$, we require that the sequences $(\alpha_i)$ and 
$(\beta_i)$ are non-decreasing.  It is automatic that the columns share the 
contiguity and monotonicity properties, so the property of having staircase form
is in fact invariant under matrix transposition.   

A graph is a bipartite permutation graph if the rows and columns of its
biadjacency matrix can be (independently) permuted so that the resulting
biadjacency matrix has staircase form.
A reflexive graph is a proper interval graph if there exists a permutation of the vertices so that the resulting adjacency matrix has staircase form.  

As we mentioned in  Section~\ref{sec:intro}, an appealing feature of  our theorems is
that  our proofs are largely self-contained. The only pre-requisites for the proof 
are complexity results classifying some very well-known approximation problems.
These are collected in Lemma~\ref{lem:zero}.
For this, we will use the graph $K'_2$ defined in  Section~\ref{sec:intro} --- see Figure~\ref{fig:2-wrench}.
We will also use the following definitions.
\begin{definition}
Let $P_4$ be the path of length three (with four vertices).   
\end{definition}
\begin{definition}\label{def:Ising}
Let $0<\lambda<1$ be a rational number and let $\Delta$ be a positive integer.
We consider the following problem.
\begin{description}
\item[Name] $\AntiFerroIsingd{\lambda}{\Delta}$.
\item[Instance]  \emph{A graph $G$ of maximum degree at most~$\Delta$.}
\item[Output] \emph{The partition function   of the {\it antiferromagnetic Ising model\/} with parameter~$\lambda$
evaluated on instance~$G$, i.e.,}
$$
Z_\lambda(G)=\sum_{\sigma:V\rightarrow \{\pm1\}} \prod_{\{u,v\}\in E(G)}\lambda^{\delta(\sigma(u),\sigma(v))},
$$
\emph{where $\delta(i,j)$ is 1 if $i=j$ and 0 otherwise.}
\end{description}
\end{definition}

\begin{definition}\label{def:pnsat}
$\pnsat$ is the problem of counting the satisfying assignments of 
a CNF formula 
in which each clause
has at most one negated 
literal 
and at most one unnegated literal.\end{definition}
\begin{remark}
Note that each clause
of an instance of $\pnsat$ 
is either a single literal, or the relation ``implies'' between two variables.
\end{remark}
 
 \begin{lemma}\label{lem:zero}
 The following problems are $\nSAT$-equivalent:
 \begin{itemize}
 \item
 $\Xcold{K'_2}{\Delta}$ for any $\Delta\geq 6$, and
 \item $\AntiFerroIsingd{\lambda}{\Delta}$  
 for any  $\Delta\geq 3$ and $0<\lambda<(\Delta-2)/\Delta$.
 \end{itemize}
 The following problems are $\nBIS$-equivalent: $\Xcold{P_4}{\Delta}$ for $\Delta\geq 6$ and $\pnsat$.
 \end{lemma}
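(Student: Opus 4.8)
The plan is to verify each of the four equivalences by supplying the ``easy'' (AP-reduction to $\nSAT$ resp.\ $\nBIS$) and ``hard'' (AP-reduction from $\nSAT$ resp.\ $\nBIS$) direction separately; in each case one direction is routine and the other is an already-known hardness result combined with an elementary gadget. For the easy direction, note first that $\Xcold{K'_2}{\Delta}$ counts independent sets in a graph of maximum degree at most~$\Delta$ (the preimage of colour~$1$), that $\Xcold{P_4}{\Delta}$ and $\pnsat$ are counting problems in $\numP$, and that if $\lambda=p/q$ in lowest terms then $q^{|E(G)|}Z_\lambda(G)$ is a $\numP$ function of~$G$. Since $\nSAT$ is complete for $\numP$ under AP-reductions~\cite{DGGJ}, and since the exactly-computable factor $q^{|E(G)|}$ does not affect relative error, all four problems are $\nSAT$-easy; it remains to strengthen the claims for $\Xcold{P_4}{\Delta}$ and $\pnsat$ to $\nBIS$-easiness, which I do below.

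For the $\nSAT$-hardness of $\Xcold{K'_2}{\Delta}$ (for $\Delta\ge6$) and of $\AntiFerroIsingd{\lambda}{\Delta}$ (for $\Delta\ge3$ and $0<\lambda<(\Delta-2)/\Delta$), I would appeal to the bounded-degree inapproximability results of Sly, of Sly and Sun, and of Galanis, \v{S}tefankovi\v{c} and Vigoda~\cite{Sly,SlySun,GSV}. The parameter ranges in the statement are exactly the tree-non-uniqueness regions of the corresponding two-spin models: the hard-core model at fugacity~$1$ is in the non-uniqueness region on the $\Delta$-regular tree iff $\lambda_c(\Delta):=(\Delta-1)^{\Delta-1}/(\Delta-2)^\Delta<1$, which holds precisely for $\Delta\ge6$; and the antiferromagnetic Ising model with edge parameter~$\lambda$ is in the non-uniqueness region iff $\lambda<(\Delta-2)/\Delta$, which is nonvacuous precisely for $\Delta\ge3$. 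In these regions the cited works show that an oracle approximating the relevant partition function can be used to decide an $\NP$-complete problem; feeding this into the Valiant--Vazirani bisection~\cite{VV}, which AP-reduces $\nSAT$ to the decision problem $\SAT$ (as recalled in Section~\ref{sec:prelim}), yields $\nSAT\leq_{\mathrm{AP}}\Xcold{K'_2}{\Delta}$ and $\nSAT\leq_{\mathrm{AP}}\AntiFerroIsingd{\lambda}{\Delta}$ in the stated ranges. This last step --- upgrading ``$\NP$-hard to approximate, hence no FPRAS unless $\NP=\RP$'' to an actual AP-reduction from $\nSAT$ --- is the only subtle point, and it is precisely what the bisection technique is for. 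Combined with $\nSAT$-easiness, this gives $\nSAT$-equivalence.

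That $\pnsat$ is $\nBIS$-equivalent is established in~\cite{DGGJ}; I would recall the reason. Each clause of a $\pnsat$ instance is a unit literal or an implication $x\to y$, so its satisfying assignments are exactly those whose set of true variables respects the units and is closed under the implication digraph; contracting the strongly connected components of that digraph and removing forced variables reduces this to counting down-sets of a partial order, which is AP-interreducible with $\nBIS$~\cite{DGGJ}. For the reverse, given a bipartite graph $G=(U\cup V,E)$ create a variable for each vertex, encode ``$u$ is chosen'' ($u\in U$) by the variable being true and ``$v$ is chosen'' ($v\in V$) by it being false, and for each edge $\{u,v\}$ add the clause $x_u\to y_v$; the satisfying assignments then biject with the independent sets of~$G$.

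Finally, for $\Xcold{P_4}{\Delta}$ the key observation is a bijection. If a connected graph $C$ with at least one edge is not bipartite there are no homomorphisms $C\to P_4$; if $C$ is bipartite with parts $A,B$ then, by connectivity, any homomorphism $\phi\colon C\to P_4$ sends $A$ into one class and $B$ into the other class of the bipartition $\{1,3\}\mid\{2,4\}$ of~$P_4$, the two cases being interchanged by the automorphism $1\leftrightarrow4,\ 2\leftrightarrow3$. In the case $\phi(A)\subseteq\{1,3\}$, $\phi(B)\subseteq\{2,4\}$, the only forbidden pattern on an edge $\{a,b\}$ is $\phi(a)=1,\ \phi(b)=4$, so $\phi\mapsto\{a\in A:\phi(a)=1\}\cup\{b\in B:\phi(b)=4\}$ is a bijection onto the independent sets of~$C$; hence the number of homomorphisms $C\to P_4$ is twice the number of independent sets of~$C$. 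Multiplying over connected components (each isolated vertex contributing a factor~$4$), the number of homomorphisms $G\to P_4$ is an explicitly poly-time computable multiple of the number of independent sets of the bipartite graph obtained from~$G$ by deleting isolated vertices --- or $0$ if $G$ is non-bipartite --- which gives $\Xcold{P_4}{\Delta}\leq_{\mathrm{AP}}\nBIS$ for every~$\Delta$. For hardness, \cite{CGGGJSV} shows that approximately counting independent sets in bipartite graphs of maximum degree at most~$6$ is $\nBIS$-hard (again because the fugacity-$1$ hard-core model is in the non-uniqueness region for $\Delta=6$); applying the same identity in reverse to such an instance gives $\nBIS\leq_{\mathrm{AP}}\Xcold{P_4}{6}\leq_{\mathrm{AP}}\Xcold{P_4}{\Delta}$ for all $\Delta\ge6$, completing the proof. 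I expect the only real work beyond citation to be the bookkeeping in this last bijection and in matching the degree threshold $6$ to the condition $\lambda_c(\Delta)<1$.
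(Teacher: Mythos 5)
Your proposal is correct and follows essentially the same route as the paper: both rest on (a) citing Sly and Galanis--\v{S}tefankovi\v{c}--Vigoda for the $\nSAT$-hardness of the bounded-degree hard-core and antiferromagnetic Ising partition functions in the tree-non-uniqueness regions, upgraded to AP-reductions via the Valiant--Vazirani bisection technique; (b) citing Cai et al.\ for the $\nBIS$-hardness of counting independent sets in degree-$6$ bipartite graphs, together with the elementary bijection between independent sets and $P_4$-colourings; and (c) citing Dyer--Goldberg--Greenhill--Jerrum for the $\nBIS$-equivalence of $\pnsat$. You spell out a few things the paper leaves implicit (the $\nSAT$-easiness direction, the $P_4$ bijection in full detail, and the explicit verification that $\Delta\ge6$ matches the condition $\lambda_c(\Delta)<1$), but the underlying argument and sources are identical.
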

 \begin{proof}
 As we noted in Section~\ref{sec:intro},
 $\Xcold{K'_2}{\Delta}$ 
 is the problem of counting the independent sets of a graph of maximum degree at most~$\Delta$. 
 Without the degree-bound, there is a very elementary proof that this problem is $\nSAT$-equivalent
 (see \cite[Theorem 3]{DGGJ}). This suffices for  the unbounded-degree case.
 For the bounded-degree case,
 the fact that counting independent sets is  $\nSAT$-equivalent for $\Delta\geq 6$ follows from \cite[Theorem 2]{Sly}.  
 In fact, Sly shows 
 in the proof of \cite[Theorem 2]{Sly} that an FPRAS for $\Xcold{K'_2}{\Delta}$ 
 can be used (as an oracle) to provide a polynomial-time
 randomised algorithm for
 the  $\NP$-hard problem \textsc{Max-Cut}.
As is noted in the proof of \cite[Theorem 4]{DGGJ}, this gives an AP-reduction 
from $\nSAT$ to $\Xcold{K'_2}{\Delta}$, using 
the bisection technique of Valiant and Vazirani \cite[Corollary 3.6]{VV}.
 
In the unbounded-degree case, there is an elementary proof
that approximating the partition function of the anti-ferromagnetic Ising model is
$\nSAT$-equivalent. The proof  
  is  an easy reduction from the problem of counting large cuts in a graph, see~\cite[Thm 14]{JS93}.
In the bounded-degree case, 
we have to use more sophisticated results.
  The  fact that $\AntiFerroIsingd{\lambda}{\Delta}$ is $\nSAT$-equivalent for $\Delta\geq 3$ and $0<\lambda<(\Delta-2)/\Delta$  comes from \cite[Theorem 1.2]{GSV}. 
Similar to the paper of Sly mentioned above, the proof of
\cite[Theorem 1.2]{GSV} shows that an FPRAS
for   $\AntiFerroIsingd{\lambda}{\Delta}$ 
can be used to provide a polynomial-time randomised algorithm for \textsc{Max-Cut}
and this can be turned into an AP-reduction from $\nSAT$ to  $\AntiFerroIsingd{\lambda}{\Delta}$ using
the bisection technique of Valiant and Vazirani.

 Note that  $\Xcold{P_4}{\Delta}$ is equivalent to the problem of counting independent sets in bipartite graphs of maximum degree at most~$\Delta$ (this is almost by definition since the end-points of the path can be 
 interpreted as ``in'' the independent set and the other vertices of the path can be interpreted as ``out''). It follows from  \cite[Corollary 3]{CGGGJSV} that the latter problem is $\nBIS$-equivalent for $\Delta\geq 6$, thus yielding that $\Xcold{P_4}{\Delta}$ is $\nBIS$-equivalent for $\Delta\geq 6$.

Finally, the $\nBIS$-equivalence of $\pnsat$ is given in~\cite[Theorem 5]{DGGJ}.
 \end{proof}

\section{$\nSAT$-equivalence}

The aim of this
section is to establish 
the $\nSAT$-equivalence parts of   Theorems~\ref{thm:mainunbounded} and~\ref{thm:main}.

\begin{lemma}\label{lem:(ir)reflexive}
\LetH
If $H$ is 
neither reflexive nor irreflexive then, for all $\Delta\geq 6$, $\listHcold$ is  $\nSAT$-equivalent.
Hence, $\listHcol$ is $\nSAT$-equivalent.
\end{lemma}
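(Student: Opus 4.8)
The plan is to show that a connected graph~$H$ that is neither reflexive nor irreflexive always contains an induced copy of~$K'_2$, and then to ``extract'' this induced subgraph by means of list constraints, exactly in the spirit of the 2-wrench example discussed in Section~\ref{sec:intro}.

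First I would prove the structural fact. Let $L$ be the set of vertices of~$H$ carrying a loop and $U=V(H)\setminus L$. Since $H$ is not reflexive, $U\neq\emptyset$; since $H$ is not irreflexive, $L\neq\emptyset$. As $H$ is connected, there is a walk in~$H$ from a vertex of~$L$ to a vertex of~$U$, and somewhere along it two consecutive (hence adjacent) vertices $v\in L$ and $u\in U$ occur. The subgraph of~$H$ induced by $\{u,v\}$ then consists of the edge~$\{u,v\}$, a loop at~$v$, and no loop at~$u$; that is, it is isomorphic to $K'_2$ under $u\mapsto 1$, $v\mapsto 2$.

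Next I would use this to give a parsimonious, polynomial-time (in particular AP-) reduction from $\Xcold{K'_2}{\Delta}$ to $\listHcold$: given a graph~$G$ of maximum degree at most~$\Delta$, output $(G,\bfS)$ with $S_w=\{u,v\}$ for every $w\in V(G)$. A list $H$-colouring of $(G,\bfS)$ is precisely a homomorphism from~$G$ into the induced subgraph $H[\{u,v\}]\cong K'_2$, so the two instances have the same number of (list) colourings, and the degree bound is carried through unchanged. By Lemma~\ref{lem:zero}, $\Xcold{K'_2}{\Delta}$ is $\nSAT$-equivalent for every $\Delta\geq 6$; in particular it is $\nSAT$-hard, so $\listHcold$ is $\nSAT$-hard for all $\Delta\geq 6$. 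For the matching upper bound, $\listHcold$ lies in~$\numP$ (an $\NP$ machine guesses an assignment $\sigma:V(G)\to Q$ with $\sigma(w)\in S_w$ for all~$w$, and accepts iff $\sigma$ is an $H$-colouring), and every problem in~$\numP$ is AP-reducible to~$\nSAT$; hence $\listHcold$ is $\nSAT$-easy for every~$\Delta$. Combining the two bounds gives $\nSAT$-equivalence of $\listHcold$ for $\Delta\geq 6$. Finally, since every instance of $\listHcold$ with $\Delta=6$ is also an instance of $\listHcol$, the hardness transfers upward, and $\listHcol\in\numP$ supplies the upper bound, so $\listHcol$ is $\nSAT$-equivalent as well.

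I do not anticipate a genuine obstacle: the only real content is the elementary observation that a connected graph which is neither reflexive nor irreflexive must have an edge with exactly one looped endpoint, after which the list trick and Lemma~\ref{lem:zero} do all the work. The one point needing a little care is to verify that the degree bound~$\Delta$ survives the reduction, which it does since~$G$ itself is not altered.
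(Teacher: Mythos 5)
Your proposal is correct and follows essentially the same route as the paper: find an induced $K'_2$ in $H$ (which must exist since $H$ is connected, not reflexive, and not irreflexive), extract it with lists to reduce $\Xcold{K'_2}{\Delta}$ to $\listHcold$, and invoke Lemma~\ref{lem:zero}. You simply spell out details the paper leaves implicit, such as the walk argument locating the looped/unlooped adjacent pair and the $\numP$ membership giving $\nSAT$-easiness.
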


\begin{proof}
Let $\Delta\geq 6$. Since $H$ is connected, it must contain $K_2'$ as an induced subgraph.
So $\Xcold{K_2'}{\Delta}$ is AP-reducible to $\listHcold$.  
By Lemma~\ref{lem:zero}, $\Xcold{K_2'}{\Delta}$ is $\nSAT$-equivalent.
\end{proof} 
\begin{remark}
We can see already that there is a complexity gap between $\listHcol$ and $\Hcol$.
The smallest witness to this gap is the 2-wrench mentioned in  Section~\ref{sec:intro} (Figure~\ref{fig:2-wrench}).  The problem
$\Xcol{\text{2-wrench}}$ is
$\nBIS$-equivalent~\cite[Theorem 21]{DGGJ}, whereas the problem $\listXcol{\text{2-wrench}}$ is 
$\nSAT$-equivalent
by 
Lemma~\ref{lem:(ir)reflexive}.  
The point is that a graph~$H$ for which 
$\Hcol$ is $\nBIS$-easy may contain an induced subgraph $H'$ for
which $\Xcol{H'}$ is  $\nSAT$-equivalent.   In other words, the class of graphs~$H$
such that $\Hcol$ is $\nBIS$-easy is not hereditary.
Identifying  $\nSAT$-equivalent subgraphs is the
main analytical tool in this section.  For this we use results in structural 
graph theory.
\end{remark}

 The gadgets that we use in our reductions in Sections~\ref{sec:A} and~\ref{sec:B} 
are of a particularly simple kind, 
namely paths.\footnote{We were 
also able to make use of path gadgets in~\cite{PNAS}, though, as noted (see Section~\ref{sec:intro}) the results unfortunately do not carry over to our setting. Here the use of structural graph theory makes the
discovery of such gadgets pleasingly straightforward.}  Let the vertex set of the 
$L$-vertex
path be $\{1,2,\ldots,L\}$,
where the vertices are numbered according to their position on the path. 
The end vertices $1$ and~$L$ are {\it terminals}, which make connections 
with the rest of the construction.  
For each vertex $1\leq k\leq L$ there is a set of allowed colours~$S_k$.
We can describe a gadget by specifying $L$ and specifying the sets 
$(S_1,S_2,\ldots,S_L)$.  In our application, each set $S_i$ has cardinality 2,
and $S_1=S_L$.

Fix a connected graph~$H$, possibly with loops.  Our strategy for proving 
that $\listHcold$ is $\nSAT$-equivalent 
is to find a gadget 
$(\{i_1,j_1\}, \{i_2,j_2\}, \ldots, \{i_L,j_L\})$
such that 
\begin{enumerate}[(i)]
\item the sequence $(i_1,\ldots,i_L)$ is a path in $H$,
and likewise $(j_1,\ldots,j_L)$;  
\item it is never the case that both $\{i_{k},j_{k+1}\}\in E(H)$ 
and $\{j_k,i_{k+1}\}\in E(H)$; and
\item $i_1=j_L$ and $j_1=i_L$.
\end{enumerate}
If we achieve these conditions then, as we shall see, the colours
at the terminals will be negatively correlated, and from there
we will be able to encode instances of  $\AntiFerroIsingd{\lambda}{\Delta}$ for some integer $\Delta\geq 3$ and $\lambda\in(0,\frac{\Delta-2}{\Delta})$, 
and this is $\nSAT$-equivalent (Lemma~\ref{lem:zero}).  Note that although 
the ordering of elements within the sets~$S_i$ is irrelevant to the 
workings of the gadget, we write the pairs in a specific order
to bring out the path structure  that we have just described.

Fix $H$ and let $A=A_H$ be the adjacency matrix of~$H$. 
Denote by $\submat{A}{i}{j}{i'}{j'}$ the $2\times2$ submatrix of~$A$ 
indexed by rows $i$ and $j$ and columns $i'$ and $j'$.
We regard the indices in the notation $\submat{A}{i}{j}{i'}{j'}$ as ordered;
thus the first row of this $2\times2$ matrix comes from row $i$ of~$A$
and the second from row~$j$. 

Given a gadget, i.e., sequence 
$(\{i_1,j_1\}, \{i_2,j_2\}, \ldots, \{i_L,j_L\})$, consider the 
product of $2\times2$ submatrices of $A$:
\begin{equation}\label{eq:D'defn}
D'=\submat{A}{i_1}{j_1}{i_2}{j_2} \submat{A}{i_2}{j_2}{i_3}{j_3}
\cdots 
\submat{A}{i_{L-1}}{j_{L-1}}{i_L}{j_L}.
\end{equation}
If conditions (i)--(iii) for gadget construction are satisfied then 
each of the $2\times2$ matrices in the product has  1s on the diagonal;
also, all of them have at least one off-diagonal entry that is 0.  
Thus, each matrix has determinant 1, from which it follows that $\det D'=1$.

Now consider the matrix $D$ that is obtained by swapping the 
two columns of~$D'$.  This swap rectifies the ``twist'' that 
occurs in the passage from $(i_1,j_1)$ to $(i_L,j_L)=(j_1,i_1)$, 
but it also flips the sign of the determinant, leaving $\det D=-1$.  
Let $r=i_1=j_L$ and $s=j_1=i_L$.  The matrix~$D$ can
be interpreted as giving the number of list $H$-colourings of the gadget 
when the $k$'th vertex of the gadget (for $k\in \{1,\ldots,L\}$ is 
assigned the list 
$\{i_k,j_k\}$, so the
terminals are restricted
to colours in $\{r,s\}$.
Thus 
\begin{itemize}
\item the entry in
the first row and column of~$D$ is the number of colourings with both terminals 
receiving~colour~$r$, 
\item the entry in the first row and second column is the 
number of colourings with terminal~$1$ receiving colour~$r$ and terminal~$L$ 
receiving colour~$s$,  
\item 
the entry in the second row and first column is the number of colourings with terminal~$1$
receiving colour~$s$ and terminal~$L$ receiving colour~$r$, and finally
\item   the entry in the
second row and second column is the number of colourings with both terminals receiving colour~$s$.
\end{itemize}
We call $D=D(\Gamma)$ the {\it 
interaction matrix\/} associated with the gadget~$\Gamma$. 
Since $\det D<0$ the gadget provides a negative
correlation between the colours at the terminals, 
which, as we will see, will allow
a reduction
from $\AntiFerroIsingd{\lambda}{\Delta}$. 

In Sections \ref{sec:A} and \ref{sec:B}, we first demonstrate how to apply the technique to get the  $\nSAT$-equivalences in the unbounded-degree case. For these arguments, we intentionally keep the construction of the gadgets as simple as possible. While this would also lead to a value of $\Delta$ such that $\listHcold$ is $\nSAT$-equivalent, the value of $\Delta$ would be much larger than $6$. Thus, our remaining task will be to show how to refine the constructions to obtain the bounded-degree results of Theorem~\ref{thm:main}. 

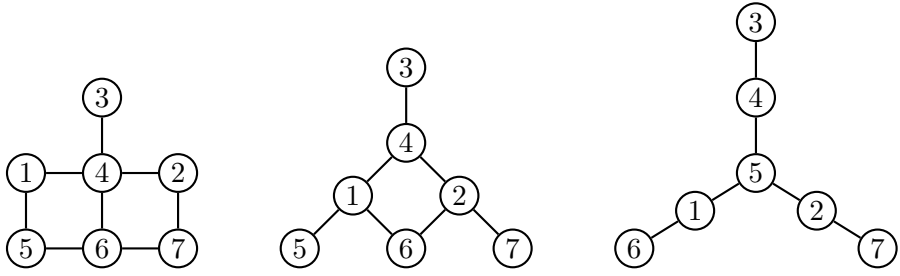
\begin{figure}[t]
\begin{center}
\begin{tikzpicture}[xscale=1,yscale=1]

\begin{scope}[shift={(3,0)}]
\draw (0,0) node[lab] (1) {$6$} ++ (0.8,0.5) node[lab] (2) {$1$} ++ (0.8,0.5) node[lab] (7) {$5$} 
   ++ (0,1) node[lab] (4) {$4$} ++ (0,1) node[lab] (3) {$3$} ++ (0.8,-2.5) node[lab] (6) {$2$} 
   ++ (0.8,-0.5) node[lab] (5) {$7$};
\draw (1) -- (2) -- (7) -- (4) -- (3);
\draw (7) -- (6) -- (5);
\end{scope}

\begin{scope}[shift={(-5,0)}]
\draw (0,0) node[lab] (2) {$5$} ++ (1,0) node[lab] (1) {$6$} ++ (1,0) node[lab] (4) {$7$} 
   ++ (0,1) node[lab] (5) {$2$} ++ (-1,0) node[lab] (6) {$4$} ++ (-1,0) node[lab] (3) {$1$} 
   ++ (1,1) node[lab] (7) {$3$};
\draw (1) -- (2) -- (3) -- (6) -- (5) -- (4) -- (1) -- (6) -- (7);
\end{scope}

\draw (0,0) node[lab] (1) {$6$} ++ (0.7,0.7) node[lab] (4) {$2$} ++ (0.7,-0.7) node[lab] (5) {$7$} 
   ++ (-1.4,1.4) node[lab] (7) {$4$} ++ (0,1) node[lab] (6) {$3$} ++ (-0.7,-1.7) node[lab] (2) {$1$} 
   ++ (-0.7,-0.7) node[lab] (3) {$5$};
\draw (2) -- (1) -- (4) -- (5);
\draw (4) -- (7) -- (6);
\draw (7) -- (2) -- (3);

\end{tikzpicture}
\end{center}

\caption{$X_3$, $X_2$ and $T_2$}
\label{fig:excludedBP}
\end{figure}

\subsection{Irreflexive graphs that are not bipartite permutation graphs}
\label{sec:A}

\begin{lemma}\label{lem:BP}
\LetH
If $H$ is irreflexive but it is not a bipartite permutation
graph, then $\listHcol$ is  $\nSAT$-equivalent.
\end{lemma}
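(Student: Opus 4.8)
\emph{Easy direction.} Since $\listHcol$ is a counting problem in $\numP$, it is AP-reducible to $\nSAT$, so the whole content of the lemma is the $\nSAT$-hardness, i.e.\ an AP-reduction \emph{from} $\nSAT$. By Lemma~\ref{lem:zero} it suffices to give an AP-reduction from $\AntiFerroIsingd{\lambda}{\Delta}$ for some integer $\Delta\ge 3$ and some rational $\lambda\in(0,(\Delta-2)/\Delta)$; and since here $G$ may have unbounded degree, we may in fact start from the unbounded-degree antiferromagnetic Ising model, which (Lemma~\ref{lem:zero}, via counting large cuts) is $\nSAT$-equivalent for \emph{every} $\lambda\in(0,1)$, so there is plenty of slack.

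\emph{Reduction to small target graphs.} The plan is to use the excluded-subgraph characterisation of Section~\ref{sec:prelim}. As $H$ is irreflexive and is not a bipartite permutation graph, $H$ has an induced subgraph $H'$ isomorphic to one of $X_3$, $X_2$, $T_2$, or a cycle $C_\ell$ with $\ell\ne 4$. Taking $S_v\subseteq V(H')$ at every vertex exhibits an AP-reduction from $\listXcol{H'}$ to $\listHcol$, so it is enough to prove $\nSAT$-hardness of $\listXcol{H'}$ for these finitely many ``shapes''. Since $X_3,X_2,T_2$ are bipartite (they contain no induced odd cycle), it is convenient to split into two cases: if $H$ is not bipartite the relevant $H'$ is an induced \emph{odd} cycle $C_\ell$ ($\ell\ge 3$); if $H$ is bipartite it is $X_3$, $X_2$, $T_2$, or an \emph{even} cycle $C_\ell$ ($\ell\ge 6$).

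\emph{Building the gadget.} For each such $H'$ I would construct a path gadget $\Gamma=(\{i_1,j_1\},\dots,\{i_L,j_L\})$ satisfying conditions (i)--(iii) above; its interaction matrix $D=D(\Gamma)$ is then a non-negative integer $2\times 2$ matrix with $\det D=-1$. The crucial extra requirement is that $\Gamma$ contains at least one \emph{crossing} step --- a $k$ for which exactly one of $\{i_k,j_{k+1}\}$, $\{j_k,i_{k+1}\}$ is an edge of $H'$, so that the corresponding $2\times 2$ block is $\bigl(\begin{smallmatrix}1&1\\0&1\end{smallmatrix}\bigr)$ or $\bigl(\begin{smallmatrix}1&0\\1&1\end{smallmatrix}\bigr)$ rather than the identity; this forces $D$ to have a strictly positive diagonal, so that the anti-correlation is ``soft'' rather than the degenerate hard constraint $\bigl(\begin{smallmatrix}0&1\\1&0\end{smallmatrix}\bigr)$. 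Write $\Gamma'$ for the gadget obtained from $\Gamma$ by interchanging the two walks $(i_k)$ and $(j_k)$ (again a legal gadget; its interaction matrix is $D$ with the two rows and the two columns swapped). Replacing each edge of an antiferromagnetic-Ising instance by two copies of $\Gamma$ and two copies of $\Gamma'$ in alternating orientations, and identifying terminals with Ising vertices, produces an instance of $\listXcol{H'}$ whose number of list $H'$-colourings equals an easily computed constant raised to $|E|$ times the Ising partition function $Z_\lambda$ of the original instance, with the induced field cancelled and $\lambda=\bigl(D_{11}D_{22}/(D_{11}D_{22}+1)\bigr)^2\in(0,1)$ (using $\det D=-1$). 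Choosing $\Delta$ with $\lambda<(\Delta-2)/\Delta$ and invoking Lemma~\ref{lem:zero} finishes the reduction. Concretely, for $H'=C_\ell$ with $\ell$ odd there is a uniform gadget of length $\ell+1$ (a ``type-A'' crossing, then a ``type-B'' crossing, which returns to the starting pair of colours, then one walk around the cycle to realise the twist) with $D=\bigl(\begin{smallmatrix}1&2\\1&1\end{smallmatrix}\bigr)$ and hence $\lambda=1/4$; for $H'=X_3,X_2,T_2$ one writes down fixed small gadgets by inspection of Figure~\ref{fig:excludedBP}.

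\emph{The main obstacle.} The hard part is the uniform construction of soft gadgets for the cycles, i.e.\ getting a crossing at all. The obvious candidates --- one walk short across an edge and the other the long way around, or two parallel walks offset by one step --- all degenerate to $D=\bigl(\begin{smallmatrix}0&1\\1&0\end{smallmatrix}\bigr)$, because in a $2$-regular graph the two walks never become adjacent and every block in the product is the identity; arranging a genuine crossing while keeping $i_k\ne j_k$ at every position is delicate. Moreover the ``walk around the cycle'' step used above for odd $\ell$ produces a collision $i_k=j_k$ when $\ell$ is even (an even cycle has an antipode, an odd one does not), so the even cycles need a separate, slightly more involved construction, and there $D$ comes out with unequal diagonal entries (for $\ell=6$ one gets $D=\bigl(\begin{smallmatrix}2&3\\1&1\end{smallmatrix}\bigr)$) --- which is exactly why the $\Gamma/\Gamma'$ symmetrisation was phrased to cancel the field in all cases. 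A secondary point, essentially free here because $G$ is unbounded-degree but the real work of Lemma~\ref{lem:BPbounded}, is to verify that the strength $\lambda$ produced always lies in the admissible window $(0,(\Delta-2)/\Delta)$ for a usable $\Delta$.
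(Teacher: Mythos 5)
Your overall strategy matches the paper's: reduce via lists to the excluded induced subgraphs $X_3$, $X_2$, $T_2$ and cycles $C_\ell$ with $\ell\neq 4$; construct list-path gadgets $\Gamma$ with interaction matrix $D$ satisfying $\det D=-1$; symmetrise to kill the field; and reduce from $\AntiFerroIsing{\lambda}$. But there is a genuine gap in the symmetrisation step. The gadget $\Gamma'$ you define ``by interchanging the two walks $(i_k)$ and $(j_k)$'' is not a different gadget at all: the colour set at position $k$ is $\{i_k,j_k\}=\{j_k,i_k\}$, so $\Gamma'$ is the \emph{same} path with the \emph{same} lists as $\Gamma$. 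Swapping the roles of $i$ and $j$ only changes a book-keeping convention for reading off the $2\times 2$ interaction matrix, not the physical gadget. When you place $\Gamma$ and ``$\Gamma'$'' in parallel and identify terminals (in any orientation), you therefore obtain entrywise powers of $D$ and $D^T$, never $D$ multiplied by its row-and-column-swapped version. The resulting diagonal entries are distinct products of powers of $D_{1,1}$ and $D_{2,2}$, so the field is \emph{not} cancelled whenever $D_{1,1}\neq D_{2,2}$. For $X_3$, $X_2$, $T_2$ and the odd cycles the gadgets used (yours and the paper's) happen to have $D_{1,1}=D_{2,2}$, so putting two copies of $\Gamma$ in both orientations already yields a symmetric Ising interaction. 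But for even cycles $C_\ell$, $\ell\ge 6$, the exhibited gadgets have $D_{1,1}\ne D_{2,2}$ (you give $\bigl(\begin{smallmatrix}2&3\\1&1\end{smallmatrix}\bigr)$, the paper $\bigl(\begin{smallmatrix}2&1\\3&1\end{smallmatrix}\bigr)$), and your construction produces a $2$-spin model with a residual field, for which Lemma~\ref{lem:zero} gives no hardness. Your closing remark that ``the $\Gamma/\Gamma'$ symmetrisation was phrased to cancel the field in all cases'' is therefore incorrect as stated.

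The paper closes this gap with a genuinely different device: every excluded subgraph $H'$ admits a graph automorphism $\pi$ of order two transposing the two terminal colours $r,s$. Applying $\pi$ to \emph{all} the lists of $\Gamma$ (interior ones included) produces a truly distinct gadget $\Gamma^\pi$ whose interaction matrix is $\bigl(\begin{smallmatrix}D_{2,2}&D_{2,1}\\D_{1,2}&D_{1,1}\end{smallmatrix}\bigr)$, and $\Gamma\,\|\,\Gamma^\pi$ has symmetric interaction $D^*=\bigl(\begin{smallmatrix}D_{1,1}D_{2,2}&D_{1,2}D_{2,1}\\D_{1,2}D_{2,1}&D_{1,1}D_{2,2}\end{smallmatrix}\bigr)$, so $\lambda=D_{1,1}D_{2,2}/(D_{1,1}D_{2,2}+1)<1$ follows from $\det D=-1$ regardless of whether $D$ itself has equal diagonal entries. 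To repair your argument you need either to invoke such an automorphism (and check that each of the finitely many $H'$ has one, as the paper does explicitly) or to exhibit a different even-cycle gadget with $D_{1,1}=D_{2,2}$; as written, the even-cycle case does not go through.
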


\begin{proof}
Graphs that are not bipartite permutation graphs
contain one of the following as an induced subgraph:
$X_3$, $X_2$, $T_2$, or a cycle of length other than~4.  (Refer to 
Figure~\ref{fig:excludedBP}.)  We just have to show that $\listHcol$
is  $\nSAT$-equivalent when $H$ is any of these. 

We consider the case~$X_3$ in detail, and the others more swiftly, as
they all follow the same general pattern.  The gadget in this case is 
$$
\Gamma=\big(\{1,2\},\{4,7\},\{3,6\},\{4,5\},\{2,1\}\big).
$$
Conditions (i) and (iii) for gadget construction are immediately satisfied,
while condition (ii) is easy to check.
Explicit calculation using (\ref{eq:D'defn}) yields
$$
D'=\submat{A}{1}{2}{4}{7}\submat{A}{4}{7}{3}{6}\submat{A}{3}{6}{4}{5}\submat{A}{4}{5}{2}{1}
=(\begin{smallmatrix}1&0\\1&1\end{smallmatrix})
(\begin{smallmatrix}1&1\\0&1\end{smallmatrix})
(\begin{smallmatrix}1&0\\1&1\end{smallmatrix})
(\begin{smallmatrix}1&1\\0&1\end{smallmatrix})=
(\begin{smallmatrix}2&3\\3&5\end{smallmatrix}).
$$
Swapping the columns of $D'$ yields the interaction matrix
$D=(\begin{smallmatrix}3&2\\5&3\end{smallmatrix})$.
As we explained earlier, $\det D=-1$.  
Obtaining a matrix~$D$ with negative determinant
is moving in the right direction, but in 
order to encode antiferromagnetic Ising we ideally want the matrix $D=(D_{i,j})$
to also satisfy $D_{1,1}=D_{2,2}$ and $D_{1,2}=D_{2,1}$.

In the case of~$X_2$, we have already $D_{1,1}=D_{2,2}$, which makes 
the task easier.  But we are not always in this favourable situation, so
we introduce a technique that works in general for 
all of the graphs that we consider.

Observe that the graph $X_3$ has an automorphism of order two,
$\pi=(1,2)(5,7)$, that transposes vertices 1 and~2,
which are the terminals of the gadget~$\Gamma$.
Consider the gadget obtained from $\Gamma$ by letting $\pi$
act on the colour sets, namely
\begin{align*}
\Gamma^\pi&=\big(\{\pi(1),\pi(2)\},\{\pi(4),\pi(7)\},
\{\pi(3),\pi(6)\},\{\pi(4),\pi(5)\},\{\pi(2),\pi(1)\}\big)\\
&=\big(\{2,1\},\{4,5\},\{3,6\},\{4,7\},\{1,2\}\big).
\end{align*}
The interaction matrix 
$D^\pi=(\begin{smallmatrix}3&5\\2&3\end{smallmatrix})$ 
corresponding to $\Gamma^\pi$ is the same as~$D$, 
except that the rows and columns are swapped.  
Placing $\Gamma$ and $\Gamma^\pi$ in parallel,
identifying the terminals, 
yields a composite gadget $\Gamma^*$    whose interaction matrix is 
\begin{equation}\label{eq:X3}
D^*=\begin{pmatrix}D_{1,1}D_{2,2}&D_{1,2}D_{2,1}\\D_{2,1}D_{1,2}&D_{2,2}D_{1,1}\end{pmatrix}
=\begin{pmatrix}9&10\\10&9\end{pmatrix}.
\end{equation}
 Clearly the same construction will work for any 
graph $H$ with an automorphism swapping the terminals of~$\Gamma$,
provided $D>0$. Note that the gadget $\Gamma^*$ has maximum degree 2 (this observation will be important in the upcoming Lemma~\ref{lem:BPbounded}). 
Also, in general, $\det D^* =D_{1,1}^2D_{2,2}^2-D_{1,2}^2D_{2,1}^2=
(D_{1,1}D_{2,2}+D_{1,2}D_{2,1})\det D<0$.  
So we have an AP-reduction from 
$\AntiFerroIsing{\lambda}$ with 
$\lambda={D_{1,1} D_{2,2}/ (D_{1,2} D_{2,1})}$ 
to $\listXcol{H}$:    given an instance~$G$ of 
$\AntiFerroIsing{\lambda}$, simply replace each
edge $\{u,v\}$ of~$G$ with a copy of the gadget~$\Gamma^*$, identifying the 
two terminals of~$\Gamma^*$
with the vertices $u$ and~$v$, respectively.
(Since $\Gamma^*$ is symmetric, it does not matter which is~$u$ and which is~$v$.)
The problem $\AntiFerroIsing{\lambda}$ is $\nSAT$-equivalent by Lemma~\ref{lem:zero}.
In the case $H=X_3$, we have    $\lambda=\frac9{10}$. 

Now we present in less detail the gadgets for  $X_2$,
$T_2$, odd cycles, and even cycles of length
at least~$6$.  For $X_2$,
the gadget is
$$
\big(\{1,2\},\{4,7\},\{3,2\},\{4,6\},\{3,1\},\{4,5\},\{2,1\}\big),
$$
with 
$$
D'=(\begin{smallmatrix}1&0\\1&1\end{smallmatrix})
(\begin{smallmatrix}1&1\\0&1\end{smallmatrix})
(\begin{smallmatrix}1&0\\1&1\end{smallmatrix})
(\begin{smallmatrix}1&1\\0&1\end{smallmatrix})
(\begin{smallmatrix}1&0\\1&1\end{smallmatrix})
(\begin{smallmatrix}1&1\\0&1\end{smallmatrix})=
(\begin{smallmatrix}5&8\\8&13\end{smallmatrix}).
$$
The interaction matrix is $D=(\begin{smallmatrix}8&5\\13&8\end{smallmatrix})$. The graph $X_2$ has an automorphism transposing $1$ and~$2$, yielding the symmetrised interaction matrix
\begin{equation*}
D^*=\begin{pmatrix}64&65\\65&64\end{pmatrix}.
\end{equation*}
The remaining part of the analysis can be completed exactly as before. So that we don't need to repeat this observation in future, let us note at this point that 
all the graphs $H$ we consider in this proof and the next
have an automorphism of order two transposing the 
two distinguished terminal colours.

For $T_2$ the gadget is 
$$
\big(\{1,2\},\{5,7\},\{4,2\},\{3,5\},\{4,1\},\{5,6\},\{2,1\}\big),
$$
with 
$$
D'=(\begin{smallmatrix}1&0\\1&1\end{smallmatrix})
(\begin{smallmatrix}1&1\\0&1\end{smallmatrix})
(\begin{smallmatrix}1&1\\0&1\end{smallmatrix})
(\begin{smallmatrix}1&0\\1&1\end{smallmatrix})
(\begin{smallmatrix}1&0\\1&1\end{smallmatrix})
(\begin{smallmatrix}1&1\\0&1\end{smallmatrix})=
(\begin{smallmatrix}5&7\\7&10\end{smallmatrix}).
$$
The interaction matrix is $D=(\begin{smallmatrix}7&5\\10&7\end{smallmatrix})$,  yielding the symmetrised interaction matrix
\begin{equation*}
D^*=\begin{pmatrix}49&50\\50&49\end{pmatrix}.
\end{equation*}

We will conclude by presenting the gadgets for odd cycles and for even cycles of length
at least~$6$. The reason for doing so is that we will use the gadgets in the upcoming Lemma~\ref{lem:BPbounded} to  obtain the result for the bounded-degree case (and
to present easy, self-contained proofs). 
For the unbounded-degree case, the remainder of the argument could be omitted since 
the result follows easily from the fact that the decision problem is $\NP$-hard
in these cases \cite[Theorem 3.1]{FHH}.

For a cycle of even length $q\geq6$ the gadget is
$$
\big(\{1,3\},\{2,4\},\{1,5\},\ldots\{1,q-1\},\{2,q\},\{3,1\}\big).
$$
Note that, for convenience, the terminal colours in this case are $1$ and~$3$,
rather than $1$ and~$2$, as elsewhere.  
To clarify the construction, we are setting $L=q-1$, and the intention is that 
the path $i_1,\ldots,i_L$ oscillates
between $1$ and~$2$, before moving at the last step to~$3$,
while the path $j_1,\ldots,j_L$ cycles clockwise from $3$ to~$1$.
We have
$$
D'=(\begin{smallmatrix}1&0\\1&1\end{smallmatrix})
(\begin{smallmatrix}1&0\\0&1\end{smallmatrix})
(\begin{smallmatrix}1&0\\0&1\end{smallmatrix})\cdots
(\begin{smallmatrix}1&0\\0&1\end{smallmatrix})
(\begin{smallmatrix}1&1\\0&1\end{smallmatrix})
(\begin{smallmatrix}1&1\\0&1\end{smallmatrix})=
(\begin{smallmatrix}1&2\\1&3\end{smallmatrix}).
$$
Note that this construction fails for $q=4$!
The interaction matrix is 
$D=(\begin{smallmatrix}2&1\\3&1\end{smallmatrix})$ and its symmetrised version is
\begin{equation*}
D^*=\begin{pmatrix}2&3\\3&2\end{pmatrix}.
\end{equation*}.

Finally, for a cycle of odd length~$q$ the gadget is
$$
\big(\{1,2\},\{2,3\},\ldots,\{1,q-1\},\{2,q\},\{1,q-1\},\ldots,\{2,3\},\{1,2\},\{2,1\}\big).
$$
To clarify the construction, we are setting $L=2q-2$, and the intention is that
the path $i_1,\ldots,i_L$ oscillates between $1$ and~$2$, 
while the path $j_1,\ldots,j_L$ cycles clockwise from $2$ to~$q$ and then 
anticlockwise back to $2$ and then on to~$1$.
We have
$$
D'=(\begin{smallmatrix}1&0\\0&1\end{smallmatrix})\cdots
(\begin{smallmatrix}1&0\\0&1\end{smallmatrix})
(\begin{smallmatrix}1&1\\0&1\end{smallmatrix})
(\begin{smallmatrix}1&0\\1&1\end{smallmatrix})
(\begin{smallmatrix}1&0\\0&1\end{smallmatrix})\cdots
(\begin{smallmatrix}1&0\\0&1\end{smallmatrix})
(\begin{smallmatrix}1&0\\0&1\end{smallmatrix})=
(\begin{smallmatrix}2&1\\1&1\end{smallmatrix}).
$$
Note that this construction works even for $q=3$. The corresponding interaction matrix is $D=(\begin{smallmatrix}1&2\\1&1\end{smallmatrix})$, and its symmetrised version is
\begin{equation*}
D^*=\begin{pmatrix}1&2\\2&1\end{pmatrix}.
\end{equation*}

In all cases, we obtain a reduction from $\AntiFerroIsing{\lambda}(\Delta)$, completing the proof.
\end{proof}

As noted earlier, we need to consider slightly more complicated gadgets to get the bounded degree results of Theorem~\ref{thm:main}. Roughly, the idea is to implement thickenings of the gadgets using carefully chosen list colourings to keep the degree of the gadget small. This is achieved in the next lemma.
\begin{lemma}\label{lem:BPbounded}
\LetH
If $H$ is irreflexive but it is not a bipartite permutation
graph, then for all $\Delta\geq 3$, $\listHcold$ is  $\nSAT$-equivalent. 
\end{lemma}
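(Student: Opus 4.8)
The plan is to strengthen the path-gadget reductions of Lemma~\ref{lem:BP} so that they output graphs of maximum degree~$3$, and to reduce from a bounded-degree antiferromagnetic Ising problem. Fix $\Delta\ge 3$. By Lemma~\ref{lem:zero}, $\AntiFerroIsingd{\lambda}{3}$ is $\nSAT$-equivalent for every rational $\lambda\in(0,1/3)$, and $\listHcold$ lies in $\numP$, so it suffices to produce, for some such~$\lambda$, an AP-reduction from $\AntiFerroIsingd{\lambda}{3}$ to $\listHcold$ whose output graph has maximum degree~$3$ (a legitimate $\listHcold$ instance since $3\le\Delta$); this shows $\listHcold$ is $\nSAT$-hard and hence $\nSAT$-equivalent.

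Since $H$ is irreflexive but not a bipartite permutation graph it contains, as an induced subgraph, one of $X_3$, $X_2$, $T_2$, an odd cycle, or an even cycle of length at least~$6$; fix such an~$F$, and recall from the proof of Lemma~\ref{lem:BP} the ingredients attached to it: the path gadget $\Gamma_F$ (of maximum degree~$2$), the order-two automorphism~$\pi$ of its distinguished terminal colour pair $\{r,s\}$, the interaction matrix $D$ with $\det D=-1$ and positive entries, and the fact that $\Gamma_F$ placed in parallel with $\Gamma_F^{\pi}$ realises the balanced antiferromagnetic interaction with parameter $\lambda_F=(D_{1,1}D_{2,2})/(D_{1,2}D_{2,1})\in(0,1)$. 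I would then observe that $F$ also contains a low-degree \emph{equality gadget}: a short path, all of whose lists are two-element sets, whose interaction matrix on $\{r,s\}$ is a positive multiple of the $2\times2$ identity. For a cycle, for instance, the length-two path whose two endpoints and whose midpoint all carry the consecutive colour pair works, using $r\not\sim r$ and $s\not\sim s$ in the irreflexive graph~$F$; the cases $X_3,X_2,T_2$, and cycles with the terminal pair used in Lemma~\ref{lem:BP}, are handled by an equally simple, finite, choice of lists.

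Equality gadgets let one pin arbitrarily many vertices to a common colour in $\{r,s\}$ while keeping every degree at most~$3$: chain the vertices together through equality gadgets, so that each pinned vertex lies on at most two of them and thus retains a free incidence. Using this I would (i) \emph{thicken} the core: take $m$ parallel copies of $\Gamma_F$ and $m$ of $\Gamma_F^{\pi}$, their $u$-ends pinned together through an equality-gadget chain and likewise their $v$-ends, yielding a gadget of maximum degree~$3$ whose interaction matrix on $\{r,s\}$ is $(\begin{smallmatrix}A&B\\ B&A\end{smallmatrix})$ with $A=(D_{1,1}D_{2,2})^{m}$, $B=(D_{1,2}D_{2,1})^{m}$ — a balanced antiferromagnetic interaction with parameter $\lambda_F^{\,m}$, which drops below $1/3$ once $m$ is large enough; and (ii) replace each vertex of the Ising instance, of degree $d\le 3$, by a chain of $d$ pinned copies of itself, one copy designated for each incident edge, again of maximum degree~$3$. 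Given an instance $G$ of $\AntiFerroIsingd{\lambda}{3}$ with $\lambda=\lambda_F^{\,m}$, substituting the thickened core gadget for each edge of~$G$ and the pinned copy-chain for each vertex produces a pair $(G',\bfS)$ of maximum degree~$3$; a routine gadget computation shows that the number of list $H$-colourings of $(G',\bfS)$ equals an explicitly computable constant times $Z_{\lambda}(G)$, so dividing out yields the AP-reduction.

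I expect the crux to be the degree bookkeeping: driving the effective Ising parameter below $1/3$ calls for many parallel copies, while the bound $\Delta=3$ forbids attaching many edges at any single vertex, and these pressures are reconciled by funnelling \emph{all} multiplicity — both the copies making up the core gadget and the branching at each original Ising vertex — through chains of equality gadgets. The only other point needing (routine, finite) care is verifying that a suitable equality gadget, and a terminal colour pair carrying the order-two automorphism, are available inside each of $X_3$, $X_2$, $T_2$ and inside every relevant cycle — checks of exactly the same elementary character as the $2\times2$ matrix computations already carried out in the proof of Lemma~\ref{lem:BP}.
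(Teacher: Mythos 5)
Your overall strategy is sound and reaches the same target (reduce $\AntiFerroIsingd{\lambda}{3}$ into a bounded-degree gadget instance), but you assemble the gadget by a different route.  The paper builds the bounded-degree gadget by a recursive \emph{doubling}: two copies of the previous gadget are placed in parallel and then buffered by fresh degree-$1$ terminals whose lists are a translated colour pair $\{r',s'\}$; the interaction-matrix entries are thereby raised to the power $2^t$.  You instead propose a flat construction with $m$ parallel paths tied together by chains of equality gadgets, raising the entries to the power $m$.  These are genuinely different assemblies, and yours is arguably conceptually lighter.  However, the two rest on exactly the same enabling fact: what you call an ``equality gadget'' on $\{r,s\}$ is precisely a short path whose inner vertex carries a colour pair $\{r',s'\}$ satisfying the one-to-one adjacency property with $\{r,s\}$, which is condition~\eqref{eq:condH} in the paper; that condition has to be verified for each of $X_3,X_2,T_2$ and for cycles (note the even-cycle terminal pair is $\{1,3\}$, so your ``consecutive colour pair'' remark does not literally apply there and a different buffer pair such as $\{q,4\}$ is needed).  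You should spell these finite checks out; the paper does.

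There is a concrete gap in the degree bookkeeping, which you yourself flag as the crux but do not resolve.  As you describe it, the thickened core gadget's $u$-terminal is one of the $2m$ pinned path-endpoints, which has degree $1$ from its own copy of $\Gamma_F$ plus at least $1$ from the $u$-chain of equality gadgets, so degree at least $2$.  When you then replace an Ising vertex $w$ of degree $3$ by a chain of three pinned copies $w_1,w_2,w_3$, the middle copy $w_2$ acquires degree $2$ from the pinning chain \emph{plus} the degree it already has as a core-gadget terminal, giving $4>3$.  The fix is the one the paper uses: ensure the core gadget's terminals have degree exactly $1$ by attaching a fresh pendant vertex via a single buffer edge at the end of the $u$-chain (rather than designating an existing $u_i$ as the terminal).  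With that change the middle copies of an Ising vertex sit at degree $1+2=3$ and the construction goes through.  Your claim that the two pressures ``are reconciled by funnelling all multiplicity\dots through chains of equality gadgets'' is asserted rather than checked; as written, the degrees do not close, and the missing degree-$1$ pendant terminal is precisely the step the paper's property~(ii) of $\Gamma^*_t$ is designed to guarantee.
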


\begin{proof}
As in the proof of Lemma~\ref{lem:BP}, it suffices to show that, for all $\Delta\geq 3$, $\listHcold$
is  $\nSAT$-equivalent when $H$ is any of $X_3$, $X_2$, $T_2$, or a cycle of length other than~4.  (Refer to 
Figure~\ref{fig:excludedBP}.) 

Again, we consider the case~$X_3$ in detail, and the other more swiftly, as
they all follow the same general pattern.  In Lemma~\ref{lem:BP} we  constructed a gadget $\Gamma^{*}$ whose terminals have 
the allowed set of colours $\{1,2\}$  and 
whose interaction matrix is given by $D^{*}=(\begin{smallmatrix}9&10\\10&9\end{smallmatrix})$ (see Equation~\eqref{eq:X3}). The gadget $\Gamma^*$ was obtained by placing two paths in parallel and hence all of its vertices have degree 2. Note that all of the gadgets $\Gamma^*$ in the proof of Lemma~\ref{lem:BP} had the same property, so we will not repeat this observation later on.

Let $t\geq 0$ be an integer. We will denote by $D^{*}_t$ the matrix whose entries are obtained from $D^{*}$ by raising the entries of $D^{*}$ to the power $2^t$. Thus, in the case of $X_3$, $D^{*}_t=\big(\begin{smallmatrix}9^{2^t}&10^{2^t}\\10^{2^t}&9^{2^t}\end{smallmatrix}\big)$. We will construct inductively a gadget $\Gamma^*_t$ with the following properties:
\begin{enumerate}[(i)]
\item The allowed colours of the terminals of $\Gamma^*_t$ will be $\{1,2\}$ for odd $t$ and $\{5,7\}$ for even $t$.
\item  The two terminals of $\Gamma^*_t$ will each have degree 1, and all other vertices 
of~$\Gamma^*_t$ will
have degree at most~$3$.
\item  The interaction matrix of $\Gamma^*_t$ will be $D^{*}_t$.
\end{enumerate}
We first do the base case $t=0$. Let $u,v$ be the terminals of $\Gamma^*$ and recall that their allowed sets of colours is $\{1,2\}$. The gadget $\Gamma^*_0$ (i.e., $t=0$) is obtained from $\Gamma^*$ by adding two new vertices $u_0,v_0$ and adding the edges  $(u_0,u),\, (v_0,v)$.  For all vertices that were initially in $\Gamma^*$, we keep their sets of allowed colours the same, and for $u_0,v_0$ we restrict their allowed colours to the set $\{5,7\}$.  Finally, we set the terminals of $\Gamma^*_0$ to be $u_0,v_0$. It is  easy to see  that $\Gamma^*_0$ satisfies properties (i) and (ii) for $t=0$. To find the interaction matrix of $\Gamma^*_0$, note that colour $1$ is adjacent to colour $5$ in $X_3$ but not colour $7$, and similarly colour $2$ is adjacent to colour $7$ in $X_3$ but not colour $5$. It follows that 
\[D(\Gamma^*_0)=\begin{pmatrix}1&0\\0&1\end{pmatrix}\begin{pmatrix}9&10\\10&9\end{pmatrix}\begin{pmatrix}1&0\\0&1\end{pmatrix}=D^*_0,\]
proving that $\Gamma^*_0$ satisfies all properties (i)---(iii), as desired.

We now carry out the induction step, so assume that for an integer $t\geq 0$, we have $\Gamma^*_t$ which satisfies all properties (i)---(iii). Let $u_t,v_t$ be the terminals of $\Gamma^*_t$. Take two copies of $\Gamma^*_t$ and place them in parallel, identifying the two copies of $u_t$ into one vertex and similarly for the two copies of $v_t$. Also, add two new vertices $u_{t+1},v_{t+1}$ and add the edges  $(u_{t+1},u_t),\, (v_{t+1},v_t)$. For all vertices that belonged to one of the copies of $\Gamma^*_t$, we keep their sets of allowed colours the same, and for $u_{t+1},v_{t+1}$ we restrict their allowed colours to the set $\{1,2\}$ if $t$ is even and to $\{5,7\}$ if $t$ is odd. The final graph is the gadget $\Gamma^*_{t+1}$ with terminals $u_{t+1},v_{t+1}$. It is immediate that $\Gamma^*_{t+1}$ satisfies property (i) and, using the fact that $\Gamma^*_{t}$ satisfies property (ii), it is simple to check that $\Gamma^*_{t+1}$ satisfies property (ii) as well. To see property (iii), note by construction that the set of allowed colours of $u_{t+1}$ and $u_t$ are either $\{1,2\}$ and $\{5,7\}$ for even $t$ or $\{5,7\}$ and $\{1,2\}$ for odd $t$, respectively. Taking also into consideration the 2-thickening of the gadget $\Gamma^*$, we obtain
\[D(\Gamma^*_{t+1})=\begin{pmatrix}1&0\\0&1\end{pmatrix}\begin{pmatrix}(9^{2^t})^2&(10^{2^t})^2\\(10^{2^t})^2&(9^{2^t})^2\end{pmatrix}\begin{pmatrix}1&0\\0&1\end{pmatrix}=D^*_{t+1},\]
thus proving property (iii) as well. This completes the induction and hence the construction of the gadgets $\Gamma^{*}_t$. 

Let $\lambda_t=9^{2^t}/10^{2^t}$ for an arbitrary integer $t\geq 0$. For every positive integer~$\Delta\geq 3$,
we now have an AP-reduction from  
$\AntiFerroIsingd{{\lambda_t}}{\Delta}$ to $\listXcold{H}{\Delta}$:    given an instance~$G$ of 
$\AntiFerroIsingd{{\lambda_t}}{\Delta}$, simply replace each
edge $\{u,v\}$ of~$G$ with a copy of the gadget~$\Gamma^*_t$, identifying the 
two terminals of~$\Gamma^*_t$ with the vertices $u$ and~$v$, respectively.  Using property (ii) of the gadget $\Gamma^*_t$,  the resulting instance of $\listXcol{X_3}$ has maximum degree at most $\Delta$. We can clearly find a positive integer $t$ such that $\lambda_t<(\Delta-2)/\Delta$, so that $\AntiFerroIsingd{{\lambda_t}}{\Delta}$ is $\nSAT$-equivalent by Lemma~\ref{lem:zero}. Thus, we obtain that $\listXcold{X_3}{\Delta}$ is $\nSAT$-equivalent for all $\Delta\geq 3$, as claimed.

To expedite the proof of the remaining cases for the graph $H$, note that the part of the argument which was crucial in the construction of the gadgets $\Gamma^*_t$ from $\Gamma^*$ was the following condition:
\begin{equation}\label{eq:condH}
\begin{array}{l}
\mbox{for the allowed colours $r,s$ for the terminals of $\Gamma^{*}$, there exist colours $r',s'$}\\
\mbox{such that $r$ is adjacent to $r'$ but not $s'$, and $s$ is adjacent to $s'$ but not $r'$.} 
\end{array}
\end{equation}
Note that 
in some circumstances it will suffice to have  $\{r,s\}=\{r',s'\}$. For example, $r'=s$ and $s'=r$
is a solution  if $(r,s)$ is an edge of $H$ and there are no self-loops on~$r$ or~$s$.   If condition \eqref{eq:condH} holds for a graph $H$ and the respective gadget $\Gamma^*$ in Lemma~\ref{lem:BP}, the above argument  yields gadgets $\Gamma^*_t$ satisfying properties (i)---(iii). We next check that this is the case for all the graphs $H$ under consideration with the single exception of the cycle of length 3.

For $X_2$, the set of allowed colours for the terminals of $\Gamma^*$ in Lemma~\ref{lem:BP} was $\{1,2\}$ and hence the colours in $\{5,7\}$ establish condition \eqref{eq:condH}.

For $T_2$, the set of allowed colours for the terminals of $\Gamma^*$ in Lemma~\ref{lem:BP} was $\{1,2\}$ and hence the colours in $\{6,7\}$ establish condition \eqref{eq:condH}.

For a cycle of even length $q \geq  6$, the set of allowed colours for the terminals of $\Gamma^*$ in Lemma~\ref{lem:BP} was $\{1,3\}$ and hence the colours in $\{q,4\}$ establish condition \eqref{eq:condH}.

For a cycle of odd length $q\geq 3$, the set of allowed colours for the terminals of $\Gamma^*$ in Lemma~\ref{lem:BP} was $\{1,2\}$ and hence the colours in $\{2,1\}$ establish condition \eqref{eq:condH}. (Note that there are no self-loops on $1,2$.)

Thus, when $H$ is any of $X_3$, $X_2$, $T_2$, or a cycle of length other than 4, we obtain, for all $\Delta\geq 3$, for some $0<\lambda<(\Delta-2)/\Delta$, an AP-reduction from  $\AntiFerroIsingd{{\lambda}}{\Delta}$  to $\listXcold{H}{\Delta}$. By Lemma~\ref{lem:zero}, $\AntiFerroIsingd{{\lambda}}{\Delta}$ is $\nSAT$-equivalent, so $\listXcold{H}{\Delta}$ is also $\nSAT$-equivalent.
\end{proof}

\subsection{Reflexive graphs that are not proper interval graphs}
\label{sec:B}

\begin{figure}[t]
\begin{center}
\begin{tikzpicture}[xscale=1,yscale=1]

\begin{scope}[shift={(-4,0)}]
\draw (0,0) node[lab] (1) {$1$} ++ (0.8,0.5) node[lab] (4) {$4$} ++ (0,1) node[lab] (3) {$3$} 
   ++ (0.8,-1.5) node[lab] (2) {$2$} ;
\draw (1) -- (4) -- (3);
\draw (4) -- (2);
\end{scope}

\draw (0,0) node[lab] (5) {$5$} ++ (0.7,0.7) node[lab] (1) {$1$} ++ (0.5,0.8) node[lab] (4) {$4$} 
   ++ (0,1) node[lab] (3) {$3$} ++ (0.5,-1.8) node[lab] (2) {$2$} ++ (0.7,-0.7) node[lab] (6) {$6$} ;
\draw (5) -- (1) -- (4) -- (2) -- (6);
\draw (3) -- (4);
\draw (1) -- (2);

\begin{scope}[shift={(4.8,0)}]
\draw (0,0) node[lab] (4) {$4$} ++ (0.5,0.8) node[lab] (1) {$1$} ++ (0.5,0.8) node[lab] (3) {$3$} 
   ++ (0.5,-0.8) node[lab] (2) {$2$} ++ (0.5,-0.8) node[lab] (6) {$6$} ++ (-1,0) node[lab] (5) {$5$}; 
\draw (4) -- (1) -- (3) -- (2) -- (6) -- (5) -- (4);
\draw (1) -- (2) -- (5) -- (1) ;

\end{scope}
\end{tikzpicture}
\end{center}

\caption{The claw, the net and $S_3$}
\label{fig:excludedPI}
\end{figure}
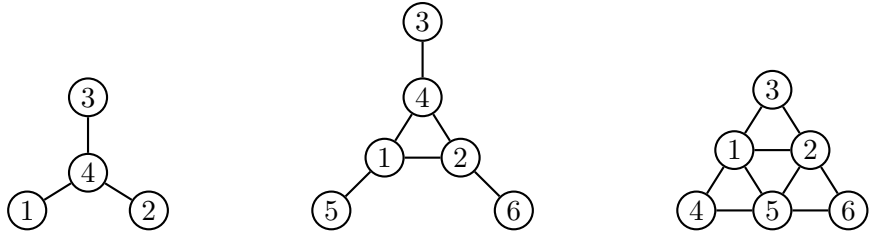

\begin{lemma}\label{lem:PI}
\LetH
If $H$ is a   reflexive graph that is not a proper interval 
graph, then $\listHcol$ is  $\nSAT$-equivalent.
\end{lemma}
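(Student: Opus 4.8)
The plan is to mirror the structure of the proof of Lemma~\ref{lem:BP}, replacing the excluded subgraphs for bipartite permutation graphs by those for proper interval graphs. Since a graph that is not a proper interval graph contains an induced claw, net, $S_3$, or a cycle $C_\ell$ with $\ell\geq 4$, and since $\listXcol{H'}$ AP-reduces to $\listHcol$ whenever $H'$ is an induced subgraph of~$H$, it suffices to treat each of these finitely many obstruction graphs, now taken to be \emph{reflexive} (every vertex carries a loop). For each one I would exhibit a path gadget $\Gamma=(\{i_1,j_1\},\ldots,\{i_L,j_L\})$ meeting conditions (i)--(iii) of the gadget construction, so that its interaction matrix $D$ has $\det D=-1$, then symmetrise using an order-two automorphism of the obstruction graph that transposes the two terminal colours to obtain $\Gamma^*$ with interaction matrix $D^*=\bigl(\begin{smallmatrix}D_{1,1}D_{2,2}&D_{1,2}D_{2,1}\\ D_{2,1}D_{1,2}&D_{1,1}D_{2,2}\end{smallmatrix}\bigr)$, which is symmetric with negative determinant and strictly positive entries. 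Replacing each edge of an instance $G$ of $\AntiFerroIsing{\lambda}$ by a copy of $\Gamma^*$ with $\lambda=D_{1,1}D_{2,2}/(D_{1,2}D_{2,1})\in(0,1)$ then gives the AP-reduction, and $\AntiFerroIsing{\lambda}$ is $\nSAT$-equivalent by Lemma~\ref{lem:zero}.

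The one genuinely new wrinkle compared with Lemma~\ref{lem:BP} is that here $H$ is reflexive, so every diagonal $2\times2$ submatrix $\submat{A}{i}{i}{i'}{i'}$ has a loop-induced $1$ in places that were $0$ in the irreflexive case; in particular condition (ii) --- that we never simultaneously have $\{i_k,j_{k+1}\}\in E(H)$ and $\{j_k,i_{k+1}\}\in E(H)$ --- must be engineered carefully, since loops make edges cheaper. The trick, as in the earlier lemmas, is to route the two paths $(i_1,\ldots,i_L)$ and $(j_1,\ldots,j_L)$ so that at each step exactly one of the two ``crossing'' pairs is a non-edge of~$H$. For the claw (centre $4$, leaves $1,2,3$), the natural gadget is $\big(\{1,2\},\{4,4\},\{3,3\},\{4,4\},\{2,1\}\big)$: here the $i$-path is $1,4,3,4,2$ and the $j$-path is $2,4,3,4,1$, each a walk in the reflexive claw, the terminal colours are $1$ and~$2$ (non-adjacent leaves), condition (ii) holds because $1\not\sim 3$ and $2\not\sim 3$ in the claw, and the transposition $(1\,2)$ is an automorphism. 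I would then compute $D'$ as a product of elementary matrices $\bigl(\begin{smallmatrix}1&1\\0&1\end{smallmatrix}\bigr)$ and $\bigl(\begin{smallmatrix}1&0\\1&1\end{smallmatrix}\bigr)$ exactly as in Lemma~\ref{lem:BP}, read off $D$, symmetrise, and record the resulting $\lambda$. Analogous gadgets work for the net and $S_3$ (using their induced non-edges to secure condition (ii) and their order-two automorphisms to symmetrise), and for a reflexive cycle $C_\ell$ with $\ell\geq 4$ one runs one path around the cycle while the other oscillates on two adjacent vertices, as in the even-cycle case of Lemma~\ref{lem:BP}; the constraint $\ell\geq 4$ is what makes there be ``enough room'' for the oscillating path to avoid creating the forbidden simultaneous crossing edges.

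The main obstacle I anticipate is purely bookkeeping rather than conceptual: for each of the four obstruction types one must (a) produce a concrete colour sequence, (b) verify conditions (i)--(iii) --- with (ii) being the delicate one in the reflexive setting --- (c) identify an order-two automorphism transposing the terminals, and (d) check that the symmetrised matrix $D^*$ has strictly positive entries so that $\lambda\in(0,1)$ and $\det D^*<0$. There is also the subtlety, already flagged in the excerpt for odd cycles, that a chosen gadget may degenerate for small~$\ell$ (e.g.\ $C_4$), so the cycle case may need to be split according to parity of $\ell$ with a separate construction for the shortest even and shortest odd cases, exactly as was done in Lemma~\ref{lem:BP}. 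Finally I would remark --- as the authors do between Lemmas~\ref{lem:BP} and~\ref{lem:BPbounded} --- that the gadgets $\Gamma^*$ built here have maximum degree~$2$, so they are set up for the subsequent bounded-degree refinement in Lemma~\ref{lem:PIbounded}; and, since the reflexive obstruction graphs have enough non-adjacent colour pairs, condition~\eqref{eq:condH} will again hold, letting the thickening argument of Lemma~\ref{lem:BPbounded} carry over verbatim.
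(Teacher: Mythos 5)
Your high-level plan matches the paper's exactly: reduce to the four reflexive obstruction graphs (claw, net, $S_3$, cycles of length at least four), build a path gadget $\Gamma$ for each satisfying conditions (i)--(iii), symmetrise via an order-two automorphism swapping the terminals, and reduce from $\AntiFerroIsing{\lambda}$. The difficulty is that the one concrete gadget you exhibit --- for the claw --- is wrong, and the error is precisely at the step you flagged as delicate.

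Your proposed claw gadget $\big(\{1,2\},\{4,4\},\{3,3\},\{4,4\},\{2,1\}\big)$ violates the construction in two ways. First, $\{4,4\}$ and $\{3,3\}$ are singletons, whereas each $S_k$ must have cardinality two for the interaction-matrix argument even to make sense: the block $\submat{A}{1}{2}{4}{4}=\bigl(\begin{smallmatrix}1&1\\1&1\end{smallmatrix}\bigr)$ is singular, so $\det D'=0$ rather than $1$, and there is no negative correlation to exploit. Second, and independently, condition~(ii) fails at $k=1$: with $i=(1,4,3,4,2)$ and $j=(2,4,3,4,1)$ you have $\{i_1,j_2\}=\{1,4\}\in E(H)$ and $\{j_1,i_2\}=\{2,4\}\in E(H)$ simultaneously. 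Your stated justification ``$1\not\sim3$ and $2\not\sim3$'' checks the wrong pairs --- those non-adjacencies are irrelevant to the $k=1$ and $k=3$ crossings, which involve the centre vertex $4$. What the reflexive setting actually buys you is the ability to \emph{stagger} the two paths: because every vertex has a loop, one path can pause while the other advances. The paper's claw gadget $\big(\{1,2\},\{4,2\},\{3,4\},\{4,1\},\{2,1\}\big)$ has $i=(1,4,3,4,2)$ but $j=(2,2,4,1,1)$; at each step exactly one of the two crossing pairs is a claw non-edge, every $2\times2$ block is unit-determinant, and $D'=\bigl(\begin{smallmatrix}2&3\\3&5\end{smallmatrix}\bigr)$. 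The same staggering idea drives the net and $S_3$ gadgets. Relatedly, for the reflexive cycle the paper keeps one path \emph{parked} at vertex $1$ (via the loop) while the other goes around; a single gadget then works for every $q\ge4$ with no parity split, so the remark at the end of your second paragraph about separate even/odd constructions is an artefact of the irreflexive case and is not needed here.
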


\begin{proof}
The line of argument is  the same as in the proof of Lemma~\ref{lem:BP}.
Graphs that are not proper interval graphs
contain one of the following as an induced subgraph:
the claw, the net, $S_3$, or a cycle of length at least four.  
(Refer to 
Figure~\ref{fig:excludedPI} but note that loops are omitted.)  
We just have to show that $\listHcol$ is  $\nSAT$-equivalent when $H$ 
is any of these. 

For the claw, the gadget we use is 
$$
\big(\{1,2\},\{4,2\},\{3,4\},\{4,1\},\{2,1\}\big),
$$
with 
$$
D'=
(\begin{smallmatrix}1&0\\1&1\end{smallmatrix})
(\begin{smallmatrix}1&1\\0&1\end{smallmatrix})
(\begin{smallmatrix}1&0\\1&1\end{smallmatrix})
(\begin{smallmatrix}1&1\\0&1\end{smallmatrix})
=
(\begin{smallmatrix}2&3\\3&5\end{smallmatrix}),
$$
and interaction matrix $D=(\begin{smallmatrix}3&2\\5&3\end{smallmatrix})$.
The claw has an automorphism transposing vertices $1$ and $2$ (as do the
other graphs graphs $H$ that we meet in this proof) so we can complete the
reduction from the partition function of the antiferromagnetic Ising model
  as in the proof of Lemma~\ref{lem:BP}.

For the net, the gadget is
$$
\big(\{1,2\},\{4,6\},\{3,2\},\{3,1\},\{4,5\},\{2,1\}\big),
$$
with 
$$
D'=(\begin{smallmatrix}1&0\\1&1\end{smallmatrix})
(\begin{smallmatrix}1&1\\0&1\end{smallmatrix})
(\begin{smallmatrix}1&0\\0&1\end{smallmatrix})
(\begin{smallmatrix}1&0\\1&1\end{smallmatrix})
(\begin{smallmatrix}1&1\\0&1\end{smallmatrix})
=
(\begin{smallmatrix}2&3\\3&5\end{smallmatrix}).
$$
and interaction matrix $D=(\begin{smallmatrix}3&2\\5&3\end{smallmatrix})$.

For $S_3$ it is 
$$
\big(\{1,2\},\{3,6\},\{3,5\},\{3,4\},\{2,1\}\big),
$$
with 
$$
D'=(\begin{smallmatrix}1&0\\1&1\end{smallmatrix})
(\begin{smallmatrix}1&0\\0&1\end{smallmatrix})
(\begin{smallmatrix}1&0\\0&1\end{smallmatrix})
(\begin{smallmatrix}1&1\\0&1\end{smallmatrix})
=
(\begin{smallmatrix}1&1\\1&2\end{smallmatrix}).
$$
and interaction matrix $D=(\begin{smallmatrix}1&1\\2&1\end{smallmatrix})$.

Finally, for the cycle of length $q\geq4$ it is
$$
\big(\{1,2\},\{1,3\},\ldots,\{1,q-1\},\{1,q\},\{2,1\}\big).
$$
Here $L=q$, and the path $i_1,\ldots,i_L$ cycles round the loop at vertex~$1$
and moves to $2$ at the last step, 
while the path $j_1,\ldots,j_L$ cycles clockwise from vertex $2$ to vertex~$1$.
We have
$$
D'=(\begin{smallmatrix}1&0\\1&1\end{smallmatrix})
(\begin{smallmatrix}1&0\\0&1\end{smallmatrix})
(\begin{smallmatrix}1&0\\0&1\end{smallmatrix})\cdots
(\begin{smallmatrix}1&0\\0&1\end{smallmatrix})
(\begin{smallmatrix}1&1\\0&1\end{smallmatrix})
(\begin{smallmatrix}1&1\\0&1\end{smallmatrix})=
(\begin{smallmatrix}1&2\\1&3\end{smallmatrix}).
$$
and the interaction matrix is $D=(\begin{smallmatrix}2&1\\3&1\end{smallmatrix})$.
This completes the analysis of the excluded subgraphs and the proof. 
\end{proof}

\begin{lemma}\label{lem:PIbounded}
\LetH
If $H$ is a   reflexive graph that is not a proper interval 
graph, then, for $\Delta\geq 3$, $\listHcold$ is  $\nSAT$-equivalent.
\end{lemma}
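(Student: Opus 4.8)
The plan is to replay, with the reflexive excluded subgraphs in place of the irreflexive ones, the passage from Lemma~\ref{lem:BP} to Lemma~\ref{lem:BPbounded}. As in the proof of Lemma~\ref{lem:PI}, it is enough to treat the case in which $H$ is itself one of the graphs excluded from the class of proper interval graphs: the claw, the net, $S_3$, or a cycle $C_q$ of length $q\geq 4$ (see Figure~\ref{fig:excludedPI}). For each such~$H$ the proof of Lemma~\ref{lem:PI} already constructs a path-based gadget $\Gamma$ of maximum degree~$2$ and---as noted there, using the order-two automorphism of~$H$ transposing vertices~$1$ and~$2$---a symmetrised gadget $\Gamma^{*}$ of maximum degree~$2$ whose terminals carry the distinguished colour set $\{1,2\}$ and whose interaction matrix $D^{*}$ is a positive symmetric $2\times2$ matrix of negative determinant. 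The inductive thickening of Lemma~\ref{lem:BPbounded}---which turns $\Gamma^{*}$ into gadgets $\Gamma^{*}_t$ of maximum degree~$3$ with terminals of degree~$1$ and interaction matrix $D^{*}_t$ (the entries of $D^{*}$ raised to the power $2^{t}$), and then supplies an AP-reduction from $\AntiFerroIsingd{\lambda_t}{\Delta}$ for parameters $\lambda_t\in(0,1)$ strictly decreasing to~$0$---applies verbatim as soon as condition~\eqref{eq:condH} is established for $H$ and the terminal colours $r=1$, $s=2$. So the whole proof reduces to a finite verification of~\eqref{eq:condH}.

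The only genuinely new point is that, since $H$ is reflexive, the cheap instantiation $r'=s$, $s'=r$ of~\eqref{eq:condH} that is available in Lemma~\ref{lem:BPbounded} whenever the terminals are adjacent is no longer valid: the loop at~$r$ makes $r$ adjacent to $r'=s$. When the terminal colours $1$ and~$2$ are non-adjacent in~$H$---as for the claw---we may instead take $r'=r=1$ and $s'=s=2$, using the loops to get $1\sim r'$ and $2\sim s'$ and the non-adjacency of $1,2$ to get $1\not\sim s'$ and $2\not\sim r'$. In the three remaining cases the terminals $1$ and~$2$ are adjacent in~$H$, so we must exhibit genuine separating colours, read off Figure~\ref{fig:excludedPI}: for the net take $r'=5$ and $s'=6$ (the pendant vertices attached to~$1$ and to~$2$); for $S_3$ take $r'=4$ and $s'=6$; and for a cycle $C_q$ with $q\geq 4$ take $r'=q$ and $s'=3$---here the hypothesis $q\geq 4$ is precisely what gives $1\not\sim 3$ and $2\not\sim q$, and this is also where the construction correctly breaks down for $C_3$, which is in any case a proper interval graph. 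In each case the four required relations $1\sim r'$, $1\not\sim s'$, $2\sim s'$, $2\not\sim r'$ are immediate from the figure.

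With~\eqref{eq:condH} verified, the argument of Lemma~\ref{lem:BPbounded} yields, for every integer $\Delta\geq 3$, an AP-reduction from $\AntiFerroIsingd{\lambda}{\Delta}$ to $\listXcold{H}{\Delta}$ for some rational $\lambda$ with $0<\lambda<(\Delta-2)/\Delta$; by Lemma~\ref{lem:zero} this problem is $\nSAT$-equivalent, hence so is $\listHcold$, as claimed. I do not expect a real obstacle here: the single piece of work is the case analysis of~\eqref{eq:condH}, whose only subtlety is that reflexivity forces the auxiliary colours $r',s'$ to come from outside $\{1,2\}$ rather than recycling the terminal colours; each of the four checks is a one-line inspection of the excluded-subgraph figure, and the rest of the machinery is imported unchanged from Lemma~\ref{lem:BPbounded}.
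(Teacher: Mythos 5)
Your proof is correct and follows essentially the same approach as the paper's: reduce to the four reflexive excluded subgraphs from Lemma~\ref{lem:PI}, verify condition~\eqref{eq:condH} with exactly the same auxiliary colours $r',s'$ that the paper uses (namely $\{1,2\}$ for the claw via the loops, $\{5,6\}$ for the net, $\{4,6\}$ for $S_3$, and $\{q,3\}$ for $C_q$), and invoke the gadget-thickening machinery of Lemma~\ref{lem:BPbounded} together with Lemma~\ref{lem:zero}. The extra remark you make --- that reflexivity rules out the cheap instantiation $r'=s$, $s'=r$ when the terminals are adjacent --- is a helpful clarification of why the auxiliary colours must be chosen afresh, but it does not alter the method.
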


\begin{proof}
Similarly to our derivation of Lemma~\ref{lem:BPbounded} from Lemma~\ref{lem:BP}, to derive the lemma from Lemma~\ref{lem:PI}, we only need to check that condition \eqref{eq:condH} holds when $H$  is the claw, the net, $S_3$, or a cycle of length at least four (we iterate the condition for convenience):
\begin{equation*}\tag{\ref{eq:condH}}
\begin{array}{l}
\mbox{for the allowed colours $r,s$ for the terminals of $\Gamma^{*}$, there exist colours $r',s'$}\\
\mbox{such that $r$ is adjacent to $r'$ but not $s'$, and $s$ is adjacent to $s'$ but not $r'$.} 
\end{array}
\end{equation*}

For the claw, the set of allowed colours for the terminals of $\Gamma^*$ in Lemma~\ref{lem:PI} was $\{1,2\}$ and hence the colours in $\{1,2\}$ establish the condition (note the self-loops on colours 1,2).

For the net, the set of allowed colours for the terminals of $\Gamma^*$ in Lemma~\ref{lem:PI} was $\{1,2\}$ and hence the colours in $\{5,6\}$ establish condition \eqref{eq:condH}.

For $S_3$, the set of allowed colours for the terminals of $\Gamma^*$ in Lemma~\ref{lem:PI} was $\{1,2\}$ and hence the colours in $\{4,6\}$ establish condition \eqref{eq:condH}.

For a cycle of length $q\geq 4$, the set of allowed colours for the terminals of $\Gamma^*$ in Lemma~\ref{lem:BP} was $\{1,2\}$ and hence the colours in $\{q,3\}$ establish  condition \eqref{eq:condH}.
\end{proof}

\section{\#BIS-hardness}
We will use the following lemma.
\begin{lemma}\label{lem:BIShardWidom}
For all $\Delta\geq 6$, $\listXcold{P_3^*}{\Delta}$ is $\nBIS$-hard.
\end{lemma}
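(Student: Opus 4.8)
The plan is to reduce to $\listXcold{P_3^*}{\Delta}$ from the problem of counting independent sets in bipartite graphs of maximum degree at most~$\Delta$. By Lemma~\ref{lem:zero} the problem $\Xcold{P_4}{\Delta}$ is $\nBIS$-equivalent (hence $\nBIS$-hard) for $\Delta\geq 6$, and, as observed in the proof of that lemma, $\Xcold{P_4}{\Delta}$ is equivalent to the problem of counting independent sets in bipartite graphs of maximum degree at most~$\Delta$; so it suffices to give an AP-reduction from the latter to $\listXcold{P_3^*}{\Delta}$. The key observation is that $\listXcold{P_3^*}{\Delta}$ is essentially a list version of the Widom--Rowlinson model: $P_3^*$ has vertex set $\{1,2,3\}$, a loop at every vertex, and edges $\{1,2\}$ and $\{2,3\}$, so the only pair of distinct vertices of $P_3^*$ that is non-adjacent is $\{1,3\}$. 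Thus a list $P_3^*$-colouring of $(G,\bfS)$ is just a colouring $\sigma$ with $\sigma(v)\in S_v$ in which no edge of $G$ has its two endpoints coloured $1$ and~$3$; colour~$2$ is compatible with every colour, and an edge of $G$ may also be monochromatic in colour~$1$ or in colour~$3$.

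Next I would set up the encoding. Given a bipartite graph $G=(U\cup V,E)$ of maximum degree at most~$\Delta$, let $\bfS$ assign the list $\{1,2\}$ to every vertex of~$U$ and the list $\{2,3\}$ to every vertex of~$V$; since the underlying graph is still $G$, the instance $(G,\bfS)$ has maximum degree at most~$\Delta$, which is what makes the reduction respect the degree bound. For a list $P_3^*$-colouring $\sigma$ of $(G,\bfS)$ set $I_\sigma=\{u\in U:\sigma(u)=1\}\cup\{v\in V:\sigma(v)=3\}$. On an edge $\{u,v\}$ with $u\in U$ and $v\in V$ we have $\sigma(u)\in\{1,2\}$ and $\sigma(v)\in\{2,3\}$, so the colouring constraint is violated precisely when $\sigma(u)=1$ and $\sigma(v)=3$, that is, precisely when both $u$ and $v$ lie in $I_\sigma$; since $G$ is bipartite, $I_\sigma$ has no edges inside $U$ or inside $V$, so $\sigma$ is a valid list $P_3^*$-colouring if and only if $I_\sigma$ is an independent set of~$G$. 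Conversely, every independent set $I$ of $G$ arises from exactly one such colouring (colour $I\cap U$ with~$1$, colour $I\cap V$ with~$3$, and colour every other vertex with~$2$), so $\sigma\mapsto I_\sigma$ is a bijection, and the number of list $P_3^*$-colourings of $(G,\bfS)$ equals the number of independent sets of~$G$.

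To finish, I would observe that this construction is a parsimonious polynomial-time many-one reduction, hence a fortiori an AP-reduction; composing it with the $\nBIS$-hardness of counting independent sets in bipartite graphs of maximum degree at most~$\Delta$ (Lemma~\ref{lem:zero}) shows that $\listXcold{P_3^*}{\Delta}$ is $\nBIS$-hard for every $\Delta\geq 6$. I do not expect any serious technical obstacle here; the entire content is choosing the lists correctly, and the one point that genuinely needs care is the use of \emph{bipartite} graphs. The forbidden pair $\{1,3\}$ of $P_3^*$ is symmetric, so it models the symmetric condition ``$u$ and $v$ are not both in the independent set'' but cannot model a directed implication---this is why a reduction directly from $\pnsat$ would be more awkward, and why one should not expect to encode independent sets of arbitrary graphs, a problem which is in fact $\nSAT$-hard rather than merely $\nBIS$-hard.
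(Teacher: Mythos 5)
Your proof is correct and takes essentially the same approach as the paper's: both reduce from $\Xcold{P_4}{\Delta}$ (equivalently, counting independent sets in bounded-degree bipartite graphs, as noted in the paper's Lemma~\ref{lem:zero}) using exactly the same lists, $\{1,2\}$ on one side of the bipartition and $\{2,3\}$ on the other. The paper states the resulting one-to-one correspondence with $P_4$-colourings tersely; your version spells out the parsimonious bijection with independent sets, which is a cleaner way to see it but the same idea.
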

\begin{proof}
Let $\Delta\geq 6$. We will reduce $\Xcold{P_4}{\Delta}$ to $\listXcold{P_3^*}{\Delta}$. Consider an instance $G$ of $\Xcold{P_4}{\Delta}$. We may assume that $G$ is a bipartite graph with bipartition $(V_1,V_2)$. 

Recall $P_3^*$ from Figure~\ref{fig:2-wrench} and the colours $1,2,3$ there. For each $v\in V_1$, let $S_v=\{1,2\}$ and for each $v\in V_2$, let $S_v=\{2,3\}$. Consider the instance $(G,\bfS)$ of $\listXcold{P_3^*}{\Delta}$. It is immediate to see that list $P_3^*$-colourings of $(G,\bfS)$ are in one-to-one correspondence with $P_4$-colourings of $G$. This completes the reduction, so the result follows from Lemma~\ref{lem:zero}.
\end{proof}

 We now deal with the $\nBIS$-equivalent cases in Theorems~\ref{thm:mainunbounded} and~\ref{thm:main}.

\begin{lemma}\label{lem:BIShard}
\LetH
If $H$ is   not a reflexive complete graph or an irreflexive complete 
bipartite graph then, for all $\Delta\geq 6$,  $\listHcold$ is $\nBIS$-hard. Hence, $\listHcol$ is $\nBIS$-hard.
\end{lemma}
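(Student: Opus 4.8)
The plan is to derive the lemma from the hardness results already established in this section, together with Lemma~\ref{lem:BIShardWidom}, via two elementary observations. First, I would note that $\nBIS\in\numP$ and that $\nSAT$ is $\numP$-complete with respect to AP-reductions, so $\nBIS$ is AP-reducible to $\nSAT$; consequently every $\nSAT$-hard problem is automatically $\nBIS$-hard. Second, as with the treatment of the $2$-wrench in Section~\ref{sec:intro}, if $H'$ is an induced subgraph of~$H$ then restricting every colour set $S_v$ to a subset of $V(H')$ yields an AP-reduction from $\listXcold{H'}{\Delta}$ to $\listHcold$ (and from $\listXcol{H'}$ to $\listHcol$); so it suffices to exhibit a hard induced subgraph of~$H$.

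I would then fix $\Delta\geq 6$ and split into three cases according to the loop structure of~$H$. (a) If $H$ is neither reflexive nor irreflexive, then $\listHcold$ is $\nSAT$-equivalent by Lemma~\ref{lem:(ir)reflexive}, hence $\nBIS$-hard. (b) If $H$ is reflexive then, since $H$ is not a reflexive complete graph, it has two distinct non-adjacent vertices; I would take the first three vertices $x_0,x_1,x_2$ of a shortest path joining them, so that by minimality $\{x_0,x_1\},\{x_1,x_2\}\in E(H)$ and $\{x_0,x_2\}\notin E(H)$, and, since $H$ is reflexive, the subgraph induced on $\{x_0,x_1,x_2\}$ is a copy of~$P_3^*$. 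By Lemma~\ref{lem:BIShardWidom}, $\listXcold{P_3^*}{\Delta}$ is $\nBIS$-hard (as $\Delta\geq 6$), and hence so is $\listHcold$. (c) If $H$ is irreflexive: when $H$ is not bipartite it is not a bipartite permutation graph, so Lemmas~\ref{lem:BP} and~\ref{lem:BPbounded} give that $\listHcold$ is $\nSAT$-equivalent and hence $\nBIS$-hard; when $H$ is bipartite, I would fix its bipartition $(A,B)$ and, using that $H$ is not complete bipartite, pick non-adjacent $a\in A$, $b\in B$. A shortest path from $a$ to $b$ has odd length at least~$3$ (odd because $H$ is bipartite, length at least~$3$ because $a,b$ are non-adjacent), and its first four vertices $x_0,x_1,x_2,x_3$ induce a copy of~$P_4$: the path edges $\{x_0,x_1\},\{x_1,x_2\},\{x_2,x_3\}$ are present, $\{x_0,x_2\}$ and $\{x_1,x_3\}$ are non-edges because $x_0,x_2\in A$ and $x_1,x_3\in B$, and $\{x_0,x_3\}$ is a non-edge by minimality. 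By Lemma~\ref{lem:zero}, $\Xcold{P_4}{\Delta}$ is $\nBIS$-hard and trivially AP-reduces to $\listXcold{P_4}{\Delta}$, so $\listHcold$ is $\nBIS$-hard.

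These three cases are exhaustive: the only connected graphs excluded by the hypothesis are the irreflexive complete bipartite graphs and the reflexive complete graphs. The final clause, that $\listHcol$ is $\nBIS$-hard, then follows because $\listXcold{H}{6}$ trivially AP-reduces to $\listHcol$, giving an AP-reduction from $\nBIS$ to $\listHcol$ as well.

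I do not anticipate a real obstacle: the substance lies in the two one-line shortest-path extractions of $P_3^*$ and $P_4$, and everything else is a direct appeal to earlier lemmas. The only things needing care are verifying that the case split matches precisely the two polynomial-time exceptions in the trichotomy, and handling the irreflexive non-bipartite subcase by quoting Lemmas~\ref{lem:BP}/\ref{lem:BPbounded} (alternatively one could extract a shortest odd cycle, which is induced and of length $\neq 4$).
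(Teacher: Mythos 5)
Your proof is correct and follows essentially the same approach as the paper: the same three-way case split on the loop structure of $H$, the same shortest-path extractions of an induced $P_3^*$ (reflexive case) and induced $P_4$ (irreflexive bipartite case), and the same appeals to Lemmas~\ref{lem:(ir)reflexive}, \ref{lem:BIShardWidom}, \ref{lem:BPbounded} and~\ref{lem:zero}. The only cosmetic difference is that the paper phrases the extractions as choosing a non-adjacent pair at minimum distance and arguing that distance is exactly $2$ (resp.~$3$), whereas you take an arbitrary non-adjacent pair and read off the first few vertices of a shortest path; these are interchangeable.
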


\begin{proof}
The case when $H$ is neither reflexive nor irreflexive is covered 
by Lemma~\ref{lem:(ir)reflexive} (since $\nBIS$ is AP-reducible to $\nSAT$).

So assume next that~$H$ is reflexive 
but not complete. 
We will show that $H$ contains an induced $P_3^*$, which suffices 
by Lemma~\ref{lem:BIShardWidom}.  To find the induced $P_3^*$,
among the non-adjacent pairs of 
vertices $i,j\in V(H)$, choose a pair that minimises the graph distance
between $i$ and~$j$.  Minimality easily implies that the graph distance
between $i$ and~$j$ is in fact~2.  Let $k\in V(H)$ be a vertex that is adjacent
to both $i$ and~$j$.  Then the vertices $\{i,k,j\}$ induce a $P_3^*$. 

Finally, assume that $H$ is irreflexive but that it is not a complete bipartite graph.
\begin{description}
\item[{\bf Case 1}]
If $H$ is bipartite then, among the pairs of 
non-adjacent vertices $i,j\in V(H)$ 
on opposite sides of the bipartition, choose the
pair that minimises graph distance between $i$ and~$j$.  
Minimality easily implies that the graph distance
between $i$ and $j$ is in fact~3.  Thus, $H$ contains an induced
$P_4$, the path of length~$3$. For $\Delta\geq 6$, the problem $\Xcold{P_4}{\Delta}$ is equivalent 
to $\nBIS$ by Lemma~\ref{lem:zero}.  
\item[{\bf Case 2}] If $H$ is not bipartite, it is not a bipartite permutation graph either (trivially). By Lemma~\ref{lem:BPbounded}, for $\Delta\geq 3$, $\listXcold{H}{\Delta}$ is  $\nSAT$-equivalent.
Once again, the result follows since $\nBIS$, like every other approximate counting problem in $\numP$, 
is AP-reducible to $\nSAT$.
\end{description}
\end{proof}

\begin{remark}

Galanis, Goldberg and Jerrum~\cite{NoLife} prove a  result
that is related to, but incomparable to, Lemma~\ref{lem:BIShard}.
Namely, in the absence of lists, and in the absence of degree bounds,
they show that  if $H$  is  not a reflexive complete graph or an irreflexive complete 
bipartite graph then  $\Hcol$  is $\nBIS$-hard. In the unbounded-degree
case, this result is stronger than Lemma~\ref{lem:BIShard} (because no lists are involved). However, 
it does not apply in the bounded-degree case. Also, the proof is quite long 
and technical.
\end{remark}

\section{\#BIS-easiness}\label{sec:BISeasy}

\begin{lemma}\label{lem:BISeasy}
\LetH
If $H$ is an irreflexive bipartite permutation graph or 
a  reflexive proper interval graph, then $\listHcol$ is $\nBIS$-easy.
\end{lemma}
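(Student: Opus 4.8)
The plan is to reduce $\listHcol$ for these graphs $H$ to $\pnsat$ (equivalently to $\nBIS$, by Lemma~\ref{lem:zero}), exploiting the staircase matrix characterisations recalled in Section~\ref{sec:prelim}. The two cases — $H$ an irreflexive bipartite permutation graph, and $H$ a reflexive proper interval graph — are parallel, so I would set up a uniform framework. In both cases, fix an ordering of $V(H)$ (for the reflexive proper interval case, one ordering; for the bipartite permutation case, independent orderings of the two sides of the bipartition) so that the relevant (bi)adjacency matrix is in staircase form: the $1$s in each row are contiguous, and the left and right endpoints $\alpha_i,\beta_i$ of the interval of $1$s are both nondecreasing in $i$. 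This ordering identifies each colour with an integer, so a list $H$-colouring of $(G,\bfS)$ becomes an assignment of integers to $V(G)$ subject to (a) unary constraints $\sigma(v)\in S_v$ and (b) binary constraints: for each edge $\{u,v\}$, the pair $(\sigma(u),\sigma(v))$ must be a $1$-entry of the staircase matrix.

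The key step is to re-encode each vertex $v$ of $G$, carrying a colour from an interval of possible values, by a block of Boolean ``threshold'' variables $x_{v,1},x_{v,2},\dots$ with the intended meaning $x_{v,c}=1 \iff \sigma(v)\ge c$; monotonicity of this encoding is enforced by implication clauses $x_{v,c+1}\Rightarrow x_{v,c}$, which are exactly $\pnsat$ clauses. A unary list constraint $S_v$ — which for these graph classes, after restricting to the relevant induced subgraph, I expect to be (or to be reducible to) an interval of colours — is then a conjunction of single-literal clauses fixing the low and high threshold bits. The crucial observation is that the staircase condition makes each binary constraint expressible using only implications between threshold variables: because the set of $j$ with $A_{i,j}=1$ is an interval $[\alpha_i,\beta_i]$ with $\alpha,\beta$ monotone, the forbidden pairs are exactly those with $\sigma(v)$ too small or too large relative to $\sigma(u)$, and ``$\sigma(v)\ge f(\sigma(u))$ whenever $\sigma(u)\ge c$'' unwinds into implications $x_{u,c}\Rightarrow x_{v,f(c)}$ between individual Boolean variables — again $\pnsat$ clauses. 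One must do this symmetrically for the other endpoint $\beta$ and in both directions $u\to v$, $v\to u$, and handle the reflexive vs.\ irreflexive/bipartite split (in the bipartite case the two sides get separate threshold encodings and the staircase biadjacency matrix links them). I would then argue that list $H$-colourings of $(G,\bfS)$ are in bijection with satisfying assignments of the resulting $\pnsat$ instance, so $\listHcol$ AP-reduces (in fact, parsimoniously reduces) to $\pnsat$, which is $\nBIS$-easy by Lemma~\ref{lem:zero}.

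The main obstacle I anticipate is verifying cleanly that \emph{every} binary constraint coming from a staircase matrix really does decompose into $\pnsat$ clauses over the threshold variables — i.e.\ that ``$(i,j)$ is a $1$-entry'' is equivalent to a conjunction of statements each of the form ``$\sigma(u)\ge c \Rightarrow \sigma(v)\ge c'$'' or ``$\sigma(u)\ge c \Rightarrow \sigma(v)\le c'$'' (the latter being $x_{u,c}\Rightarrow \neg x_{v,c'+1}$, still one negated and at most one unnegated literal). This is where the precise monotonicity of $\alpha_i$ and $\beta_i$ is used, and it is the heart of the argument; the bookkeeping for the bipartite case (two orderings, off-diagonal blocks) and for making list constraints $S_v$ fit the interval picture — or, where $S_v$ is not an interval, observing that the induced subgraph of $H$ on $S_v$ is still in the same hereditary class and re-applying the construction — is routine once the core decomposition is in place.
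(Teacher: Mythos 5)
Your overall plan --- threshold variables $x_{v,c}$ encoding $\sigma(v)\ge c$, monotonicity clauses $x_{v,c+1}\Rightarrow x_{v,c}$, and exploiting the staircase form of the (bi)adjacency matrix to express edge constraints as implications, then reducing to $\pnsat$ --- is the same approach the paper takes. But your encoding of the upper-bound half of the edge constraint has a concrete error. You propose to encode ``$\sigma(u)\ge c \Rightarrow \sigma(v)\le c'$'' as the clause $x_{u,c}\Rightarrow \neg x_{v,c'+1}$, asserting this has ``one negated and at most one unnegated literal''; it does not. Written out, it is $\neg x_{u,c}\vee\neg x_{v,c'+1}$, which has \emph{two} negated literals and no unnegated one, so it is not a $\pnsat$ clause, and taking the contrapositive ($\sigma(v)\ge c'+1\Rightarrow\neg(\sigma(u)\ge c)$) yields the very same clause. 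This is exactly the point where the monotonicity of $\beta_i$ has to be used more carefully: because $\beta_i$ is non-decreasing, the upper bound on $\sigma(v)$ given $\sigma(u)$ is \emph{equivalent} to a family of \emph{lower-bound} implications running the other way along the edge, of the form ``$\sigma(v)\ge\beta_i+1\Rightarrow\sigma(u)\ge i+1$''. In the paper's notation these are the clauses $\IMP(x^v_{\beta_i},x^u_i)$, each a positive threshold-to-threshold implication with exactly one negated and one unnegated literal. Your instinct to handle ``both directions $u\to v$, $v\to u$'' is right, but the $v\to u$ direction must \emph{replace} your negated $\beta$-clause, not supplement it; no negated-threshold clauses are needed at all.

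A second, smaller gap is the handling of lists. You assume each $S_v$ is (or is ``reducible to'') an interval in the staircase ordering, with a fallback of ``re-applying the construction to the subgraph of $H$ induced by $S_v$''. Neither works as stated: the sets $S_v$ are arbitrary subsets of $V(H)$, and since a \emph{single} staircase ordering of $V(H)$ must serve all edge constraints simultaneously, you cannot re-order per vertex. The repair is immediate from your own encoding: the colour $c$ is assigned to $v$ exactly when $x_{v,c}=1$ and $x_{v,c+1}=0$, so adding the $\pnsat$ clause $x_{v,c}\Rightarrow x_{v,c+1}$ for each $c\notin S_v$ forbids precisely the colours outside $S_v$, interval or not (this is what the paper's $\Psi_L$ does). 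Once both points are repaired, your reduction coincides with the paper's.
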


\begin{proof}
The reduction is done in a more general weighted setting by Chen, Dyer, 
Goldberg, Jerrum, Lu, McQuillan and Richerby~\cite{ApproxCSP}:  see the 
proofs of Lemmas 45 and~46 of that article.  However, in the current context,
we can simplify the reduction significantly (eliminating the need for multimorphisms
and other concepts from universal algebra), and we can also extract a slightly 
stronger statement, which will be presented in Corollaries~\ref{cor:RHPi1} and~\ref{cor:RHPi2}.
The target problem for our reduction is $\pnsat$ (see Definition~\ref{def:pnsat}), 
which is $\nBIS$-equivalent by Lemma~\ref{lem:zero}.

First, suppose that $H$ is a connected irreflexive bipartite permutation graph 
whose  
biadjacency matrix~$B$ has $q_1$ rows and $q_2$ columns and
is in staircase form.
Let $A$ be the (permuted adjacency) matrix
$\left( \begin{smallmatrix} B & 0 \\  0 & B^T \end{smallmatrix}\right)$, which is formally
defined as follows.
$$A_{i,j} = \begin{cases} B_{i,j}, & \mbox{if $1\leq i \leq q_1$, $1\leq j \leq q_2$}\\
B_{j-q_2,i-q_1}, & \mbox{if $q_1+1\leq i \leq q_1+q_2$, $q_2+1\leq j \leq q_2+q_1$}\\
0, & \mbox{otherwise.}
\end{cases}$$
Let $q=q_1+q_2$. For each $i\in\{1,\ldots,q\}$, let 
$\alpha_i = \min\{j : A_{i,j} = 1\}$ and let
$\beta_i = \max\{j : A_{i,j}=1\}$.
Since $B$ is in staircase form, so is~$A$, so the sequences $(\alpha_i)$ and $(\beta_i)$ are non-decreasing.
 Let $r_1,\ldots,r_q$ be the colours associated with the rows of~$A$ 
and $c_1,\ldots,c_q$ be the colours associated with the columns of~$A$, in order.
Note that $\{r_1,\ldots,r_q\}$ and $\{c_1,\ldots,c_q\}$ are different permutations of the vertices of~$H$.
 
Suppose that $(G,\mathbf{S})$ is an instance of $\listHcol$. 
Assume without loss of generality that $G$ is bipartite. Otherwise, it has no 
$H$-colourings.
Let $V_1(G)\cup V_2(G)$ be the bipartition of~$V(G)$.
We will construct an instance $\Psi$ of $\pnsat$ such that the number
of satisfying assignments to~$\Psi$ is equal to the number of  
list
$H$-colourings of~$(G,\mathbf{S})$.

The variable set of~$\Psi$ is  
$\bfx=\{x_i^u:u\in V(G)\text{ and }0\leq i\leq q\}$.  For each vertex $u\in V(G)$
we introduce the clauses $(x_0^u)$ and $(\neg x_q^u)$.
Also, for each  $j\in \{1,\ldots,q\}$ we introduce the clause
 $\IMP(x_j^u,x_{j-1}^u)$. 
 Denote by $\Psi_V(\bfx)$ the formula obtained by taking the conjunction 
of all  these clauses.

We will interpret the assignment to the variables in $\bfx$ as an assignment $\sigma$ of colours to
the vertices of~$G$ according to the following rule.
If $u\in V_1(G)$ then
$x_i^u=1$ if and only if $\sigma(u)  = r_j$ for some $j>i$.
If $u\in V_2(G)$ then
$x_i^u=1$ if and only if $\sigma(u) = c_j$ for some $j>i$.
Note that there is a one-to-one correspondence between assignments to $\bfx$ that satisfy the clauses
in $\Psi_V(\bfx)$ and assignments $\sigma$ of colours to the vertices of~$G$.

We now introduce further clauses to enforce the constraint on
colours received by adjacent vertices.  
For each edge $\{u,v\}\in E(G)$ with $u\in V_1(G)$ and $v\in V_2(G)$,
and for each
 $i\in \{1,\ldots,q\}$, we add the clauses
$
\IMP(x_{i-1}^u,x_{\alpha_i-1}^v)$ and
$\IMP(x_{\beta_i}^v,x_i^u)$. 
Denote by $\Psi_E(\bfx)$ the formula obtained by taking the conjunction of all of these clauses.
 
We next argue that there is
a bijection between  $H$-colourings of~$G$ 
and satisfying assignments to~$\Psi_V(\bfx) \wedge \Psi_E(\bfx)$.  

In one direction, suppose $\sigma$ is an  $H$-colouring of~$G$.    
We wish to show that all clauses in $\Psi_E(\bfx)$ are satisfied. 
Consider an edge 
$\{u,v\}\in E(G)$ with $u\in V_1(G)$ and $v\in V_2(G)$.
\begin{itemize}
\item Consider the corresponding clause 
$\IMP(x_{i-1}^u,x_{\alpha_i-1}^v)$.
The clause is satisfied unless $x_{i-1}^u=1$, so suppose $x_{i-1}^u=1$.
Then by the interpretation of assignments, $\sigma(u)=r_j$ for some $j\geq i$.
Since $\sigma$ is an $H$-colouring, this implies that
$\sigma(v)=c_k$ for some $k\geq \alpha_i$.
But by the interpretation of assignments, this means that $x_{\alpha_i-1}^v=1$, so the clause is satisfied.
\item Consider the other corresponding clause 
$\IMP(x_{\beta_i}^v,x_i^u)$. 
Suppose that $x_{\beta_i}^v=1$ (otherwise the clause is satisfied).
Then by the interpretation of assignments, $\sigma(v)=c_k$ for some $k> \beta_i$.
Since $\sigma$ is an $H$-colouring, this implies that
$\sigma(u) = r_j$ for some $j>i$, which implies by the interpretation of assignments
that $x_i^u=1$ so the clause is satisfied.
\end{itemize}

In the other direction, suppose $\Psi_V(\bfx) \wedge \Psi_E(\bfx)$ is satisfied.
Consider an edge 
$\{u,v\}\in E(G)$ with $u\in V_1(G)$ and $v\in V_2(G)$
and suppose that $\sigma(u)=r_i$.
\begin{itemize}
\item In the corresponding assignment $x_{i-1}^u=1$ so 
by the clause $\IMP(x_{i-1}^u,x_{\alpha_i-1}^v)$
we have $x_{\alpha_i-1}^v=1$ so $\sigma(v) = c_k$ for some $k\geq \alpha_i$.
\item In the corresponding assignment $x_{i}^u=0$ so
by the clause $\IMP(x_{\beta_i}^v,x_i^u)$, $x_{\beta_i}^v=0$, so $\sigma(v) = c_k$ for
some $k\leq \beta_i$.
\end{itemize}
We conclude that the colours $\sigma(u)$ and $\sigma(v)$ are adjacent in~$H$.
This holds for every edge, 
so $\sigma$ is an $H$-colouring of~$G$.

Finally, we add clauses to deal with lists. 
 A colour assignment $\sigma(u)=r_i $ with $u\in V_1(G)$
   is uniquely 
characterised by  
$x_{i-1}^u=1$ and $x_i^u=0$.
So we can eliminate the
possibility of $\sigma(u)=r_i$ by introducing the clause $\IMP(x_{i-1}^u,x_i^u)$.
A similar clause will forbid  a vertex $v\in V_2(G)$ to receive colour~$c_j$.
Let $\Psi_L(\bfx)$ be the conjunction of all such clauses, arising from the lists in $\mathbf{S}$.   
Let $\Psi(\bfx)= \Psi_V(\bfx) \wedge \Psi_E(\bfx) \wedge \Psi_L(\bfx)$.

Then the  list $H$-colourings of $(G,\bfS)$ are 
in bijection with the satisfying assignments to $\Psi(\bfx)$.
This concludes the case where $H$ is an irreflexive bipartite permutation graph.

The situation where $H$ is a reflexive proper interval graph is exactly the same
except that we can just take the adjacency matrix $A$ of~$H$ to be in staircase form,
so $\{r_1,\ldots,r_q\}$ is the same permutation as $\{c_1,\ldots,c_q\}$. We do not require $G$ to
be bipartite so the interpretation of the assignment of the variables in $\bfx$ as an assignment $\sigma$
of colours to the vertices of~$G$ is the same for all vertices in~$G$.
\end{proof}

As remarked in  Section~\ref{sec:intro}, we can slightly strengthen the statement of Lemma~\ref{lem:BISeasy}.
For this, we will need the definition of the complexity class $\RHPi$,
from \cite[Section 5]{DGGJ}, which builds
on the logical framework of Saluja et al~\cite{Saluja}.  
A \emph{vocabulary} is a finite set of relation symbols. 
These  are used to define
an instance of a $\numP$ problem.
In the case of $\nBIS$,  the relevant vocabulary consists of a binary relation~$\sim$ 
(representing adjacency in the input graph~$G$) and a unary relation~$L$
(representing one side of the bipartition --- for concreteness, the ``left'' side $V_1(G)$).
  
A problem in $\numP$ can be represented by a first-order sentence using the relevant vocabulary.
The sentence may use variables to represent elements of the universe, it may use 
relations in the vocabulary, and it may also use new some relation symbols 
$X_0, \ldots, X_q$ (for some~$q$).
(The fact that the sentence is ``first-order'' just means that it may quantify over variables, but
not over relations.)

The input to the $\numP$ problem is a structure consisting of
a universe together with interpretations of all relations in the vocabulary.
The problem is to count the number of
ways that the interpretations can be extended to interpretations of the
new relation symbols $X_0,\ldots,X_q$ and of any free variables in the sentence
to obtain a model of the sentence. 
Note that the input is an arbitrary structure  over the vocabulary but, for example, only
some structures over the vocabulary~$\{\sim, L\}$ correspond to  undirected bipartite
graphs, so in the logical framework we define $\nBIS$ as follows.
\begin{description}
\item[Name] $\nBIS$.
\item[Instance]  A structure consisting of a finite universe~$V$ and interpretations 
of the binary relation~$\sim$ and the unary relation~$L$.
\item[Output]  If the structure corresponds to an undirected bipartite graph~$G$
where $L$ represents one side of the bipartition then the output
is the number of independent sets of~$G$. Otherwise,  it is~$0$.
\end{description}
\begin{remark} Since it is easy to check in polynomial time whether a structure over $\{\sim, L\}$ does
represent an undirected bipartite graph with bipartition given by~$L$,
this version of $\nBIS$ is AP-interreducible with the usual version.
 \end{remark}

In the class $\RHPi$, syntactic restrictions are placed on the first-order sentence
representing the problem. It  
consists of universal quantification (over variables) followed by an unquantified CNF formula
which satisfies the special property that each clause has at most one occurrence of
an unnegated relation symbol from $X_0,\ldots,X_q$ and at most one occurrence of
a negated relation symbol from $X_0,\ldots,X_q$.

It is known   that $\nBIS$, and many other problems that are $\nBIS$-equivalent 
are contained in $\RHPi$, and in fact that they are complete for $\RHPi$ with respect to AP-reductions.
To illustrate the definitions, it is perhaps useful to give a  simple example illustrating the fact that
$\nBIS$, as defined above, is in $\RHPi$. A similar example appears in~\cite{DGGJ}.
A suitable sentence using the vocabulary $\{\sim, L\}$ is
\begin{align*}
\forall u,v \quad &
\left(L(u)\wedge u\sim v \wedge X_0(u)\implies X_0(v)\right)
\wedge \\
& \left(u \sim v \implies v \sim u \right)
\wedge
\left(\neg(u \sim v) \vee L(u) \vee L(v) \right)
\wedge
\left(\neg(u \sim v) \vee \neg L(u) \vee \neg L(v) \right). \end{align*}
Note that  the clauses 
$$\left(u \sim v \implies v \sim u \right)
\wedge
\left(\neg(u \sim v) \vee L(u) \vee L(v) \right)
\wedge
\left(\neg(u \sim v) \vee \neg L(u) \vee \neg L(v) \right)
 $$
ensure that the sentence has no models unless the input corresponds to an undirected
bipartite graph~$G$ where the relation~$L$ corresponds to one side of the bipartition.
When  this is the case,
each interpretation of the new unary relation symbol~$X_0$  corresponds to an independent set of~$G$
in the sense that 
$\{ u \in L \cap X_0 \} \cup \{u \in \overline{L} \cap \overline{X_0}\}$ is an independent set,
so independent sets are in one-to-one correspondence with interpretations of~$X_0$.
 
Having defined $\RHPi$, we are now ready to strengthen Lemma~\ref{lem:BISeasy}
to show that, if $H$ is an irreflexive bipartite permutation graph or
a reflexive proper interval graph, then $\listHcol\in \RHPi$.
In order to proceed, we must define versions of these  in the logical framework, as we did for~$\nBIS$.
First, suppose that $H$ is a connected irreflexive bipartite permutation graph with $q$ vertices.
Our vocabulary will consist of the vocabulary $\{\sim, L\}$,  together with 
$q$ unary relations $U_1, \ldots, U_{q}$.
The intention is that $U_i$ will represent the vertices of~$G$
that are allowed colour~$i$.

\begin{description}
\item[Name] $\IBPlistHcol$.
\item[Instance]  A structure consisting of a finite universe~$V$ 
and interpretations of the relations in the vocabulary $\{\sim, L,U_1, \ldots, U_{q}  \}$. 
\item[Output] If the structure consisting of $V$, $\sim$ and $L$
corresponds to an undirected bipartite graph $G$ where
$L$ represents one side of the bipartition, then  the output is
the number of  list $H$-colourings of $(G,\bfS)$ where, for each $v\in V$,
$S_v = \{ i : v\in U_i\}$ and  
$\bfS=\{S_v\}$.   Otherwise,  the output is~$0$. \end{description} 
Since it is easy to check in polynomial time whether a structure over
$\{\sim,L\}$ does represent an undirected bipartite graph, with bipartition given by~$L$,
and since $H$ is irreflexive and bipartite (so $G$ has no $H$-colourings unless it is also bipartite),
the problem $\IBPlistHcol$ is AP-interreducible with  the usual version.

\begin{corollary}\label{cor:RHPi1}
\LetH
If $H$ is an irreflexive bipartite permutation graph  then the problem $\IBPlistHcol$
is in $\RHPi$.
\end{corollary}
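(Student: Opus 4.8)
The plan is to recast the reduction from the proof of Lemma~\ref{lem:BISeasy} directly in the $\RHPi$ framework. That reduction is already \emph{uniform}: given a bipartite instance $(G,\bfS)$ of $\listHcol$ with sides $V_1(G),V_2(G)$, it produces a $\pnsat$ formula $\Psi(\bfx)=\Psi_V(\bfx)\wedge\Psi_E(\bfx)\wedge\Psi_L(\bfx)$ on variables $x_i^u$ ($u\in V(G)$, $0\le i\le q$) whose clauses follow a fixed pattern depending only on~$H$ (through the staircase data $\alpha_i,\beta_i$ and the row/column permutations $r_1,\dots,r_q$ and $c_1,\dots,c_q$), indexed by the vertices and edges of~$G$ and by the lists. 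So I would introduce $q+1$ new unary relation symbols $X_0,\dots,X_q$, with $X_i(u)$ playing the role of $x_i^u$, and write a single sentence $\forall u\,\forall v\,\Phi(u,v)$ over the vocabulary $\{\sim,L,U_1,\dots,U_q\}$ whose models over a structure encoding $(G,\bfS)$ are precisely the satisfying assignments of~$\Psi(\bfx)$.

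The matrix $\Phi(u,v)$ would be the conjunction of four groups of clauses. First, the three well-formedness clauses from the $\nBIS$ example above ($\sim$ symmetric; every $\sim$-edge crosses the $L$-partition), which mention no symbol among $X_0,\dots,X_q$. Second, $X_0(u)$, $\neg X_q(u)$, and $\IMP(X_j(u),X_{j-1}(u))$ for $1\le j\le q$, realising~$\Psi_V$. Third, for each~$i$, the guarded clauses $\neg L(u)\vee\neg(u\sim v)\vee\IMP(X_{i-1}(u),X_{\alpha_i-1}(v))$ and $\neg L(u)\vee\neg(u\sim v)\vee\IMP(X_{\beta_i}(v),X_i(u))$, realising~$\Psi_E$; here the guard $L(u)\wedge(u\sim v)$ picks out the orientation with $u\in V_1(G)$ and $v\in V_2(G)$, and the reversed orientation is vacuous by well-formedness. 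Fourth, for each~$i$, the clauses $\neg L(u)\vee U_{r_i}(u)\vee\IMP(X_{i-1}(u),X_i(u))$ and $L(u)\vee U_{c_i}(u)\vee\IMP(X_{i-1}(u),X_i(u))$, realising~$\Psi_L$ by forbidding colour $r_i$ at a left vertex not in its list, and colour $c_i$ at a right vertex not in its list.

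I would then verify, clause by clause, that each clause contains at most one unnegated and at most one negated occurrence of a symbol among $X_0,\dots,X_q$ --- the key point being that the guards built from $\sim$, $L$ and the $U_i$ are not among these symbols and so consume none of this budget --- so the sentence is syntactically in~$\RHPi$. Finally, for correctness: if the $\{\sim,L\}$-reduct of the input is not an undirected bipartite graph with bipartition~$L$, the well-formedness clauses kill every model and the count is~$0$, as the definition of $\IBPlistHcol$ requires; otherwise, writing $G$ for the graph (with $V_1(G)=L$, $V_2(G)=\overline L$) and $S_v=\{i:v\in U_i\}$, the four groups specialise exactly to the clauses of $\Psi_V(\bfx)$, $\Psi_E(\bfx)$ and $\Psi_L(\bfx)$, so the models of the sentence are in bijection with the satisfying assignments of $\Psi(\bfx)$, hence with the list $H$-colourings of $(G,\bfS)$ by the proof of Lemma~\ref{lem:BISeasy}.

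I expect the only real difficulty to be the bookkeeping in the third and fourth groups: ensuring that each pattern fires exactly once per edge --- respectively once per list-violation --- in the correct $V_1/V_2$ orientation, while at the same time keeping every clause within the ``one negated, one unnegated occurrence of an $X$-symbol'' limit. This is routine, but it is where an error would most easily hide; everything else is a faithful transcription of Lemma~\ref{lem:BISeasy} into the logical framework.
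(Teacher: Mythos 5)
Your proposal matches the paper's own proof in all essentials: introduce unary relations $X_0,\dots,X_q$ playing the role of the variables $x^u_i$, keep the three well-formedness clauses from the $\nBIS$ example, and transcribe $\Psi_V$, $\Psi_E$, $\Psi_L$ clause-by-clause with guards built only from $\sim$, $L$ and the $U_i$, which cost nothing against the ``at most one negated / one unnegated $X$'' budget. The only cosmetic deviations are the choice of guard (you use $\neg L(u)\vee\neg(u\sim v)$ where the paper uses $L(v)\vee\neg(u\sim v)$ --- equivalent once well-formedness holds) and the fact that you split the list clauses into two families $U_{r_i}$/$U_{c_i}$ according to the side of the bipartition, whereas the paper writes a single family $U_i(u)\vee\IMP(X_{i-1}(u),X_i(u))$ and implicitly elides the relabelling between position $i$ and colour $r_i$ or $c_i$; your version is, if anything, a touch more careful on that point.
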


\begin{proof}
The proof  is essentially a translation 
of the reduction from Lemma~\ref{lem:BISeasy}
into the logical setting of $\RHPi$. 
The sentence representing the   problem $\IBPlistHcol$ is of of the form 
$\forall u,v \, \Phi(u,v)$ where $\Phi(u,v)$ is a conjunction of clauses.
As in the $\nBIS$ example, we include the clauses
$$\left(u \sim v \implies v \sim u \right)
\wedge
\left(\neg(u \sim v) \vee L(u) \vee L(v) \right)
\wedge
\left(\neg(u \sim v) \vee \neg L(u) \vee \neg L(v) \right)
 $$
  to ensure that the output
is~$0$ unless the structure consisting of $V$, $\sim$ and $L$ corresponds to
an undirected bipartite graph~$G$ with bipartition given by~$L$. Suppose that this is so.

For each $0\leq i\leq q$, we 
introduce a new
unary relation $X_i$ corresponding (collectively) to the variables  with subscript~$i$ in 
the variable set $\bfx$ from Lemma~\ref{lem:BISeasy}.
Then the remaining clauses of $\Phi(u,v)$ 
come directly from the formula~$\Psi(\bfx)$.
So  from $\Psi_V$ 
we have the following clauses (which are inside the universal quantification over~$u$):
$(X_0(u))$, $\neg(X_q(u))$ and, for each $j\in\{1,\ldots,q\}$,
 $\IMP(X_j(u),X_{j-1}(u))$.

Recall that for each edge $\{u,v\}\in E(G)$ with $u\in V_1(G)$ and $v\in V_2(G)$,
$\Psi_E(\bfx)$ contains the clauses
$\IMP(x_{i-1}^u,x_{\alpha_i-1}^v)$
and $\IMP(x_{\beta_i}^v,x_i^u)$. 
In $\Phi$ these becomes the clauses 
$$ L(v)  \vee \neg (u\sim v) \vee \IMP(X_{i-1}(u), X_{\alpha_i-1}(v))$$
and
$$ L(v)  \vee \neg (u\sim v) \vee \IMP(X_{\beta_i}(v),X_i(u)).$$

  Finally,   we eliminated the possibility of 
  $\sigma(u)=i$ by adding to $\Psi_L(\bfx)$ the clause 
 $\IMP(x_{i-1}^u,x_i^u)$.
 In $\Phi$ this becomes
 $U_i(u) \vee \IMP(X_{i-1}(u),X_i(u))$. 
  
 The proof of Lemma~\ref{lem:BISeasy} guarantees that
  the models of~$\Phi$ correspond to  the list $H$-colourings of 
$(G,\mathbf{S})$, as desired. Also, each clause of~$\Phi$ uses at most one negated
relation   and at most one unnegated relation from 
the set $\{X_0,\ldots,X_q\}$.
Thus, 
 $\IBPlistHcol$ is
  in $\RHPi$.
\end{proof}

The case when $H$ is a reflexive proper interval graph is similar, but easier.
Let $q$ be the number of vertices of~$H$.
We define the problem in the logical framework using the binary relation~$\sim$ and the unary
relations~$U_1,\ldots,U_q$ as follows.
\begin{description}
\item[Name] $\RPIlistHcol$.
\item[Instance]  A structure consiting of a finite universe~$V$ 
and interpretations of the relations in the vocabulary $\{\sim, U_1, \ldots, U_{q}  \}$. 
\item[Output] If the structure consisting of $V$ and $\sim$  
corresponds to an undirected   graph $G$, then  the output is
the number of  list $H$-colourings of $(G,\bfS)$ where, for each $v\in V$,
$S_v = \{ i : v\in U_i\}$ and  
$\bfS=\{S_v\}$.   Otherwise,  the output is~$0$. \end{description} 
Then following corollary follows directly from the proof of Lemma~\ref{lem:BISeasy}, in the same was as
Corollary~\ref{cor:RHPi1}.

\begin{corollary}\label{cor:RHPi2}
\LetH
If $H$ is a reflexive proper interval graph  then the problem $\RPIlistHcol$
is in $\RHPi$.
\end{corollary}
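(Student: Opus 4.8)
The plan is to mimic the proof of Corollary~\ref{cor:RHPi1}, translating the reflexive-proper-interval case of the reduction in Lemma~\ref{lem:BISeasy} into the syntactic framework of $\RHPi$ and then simply reading off that every clause obeys the required restriction. Recall that in that case of Lemma~\ref{lem:BISeasy} we take the adjacency matrix $A$ of~$H$ itself to be in staircase form, so there is a single permutation $r_1,\ldots,r_q$ of $V(H)$ and a single pair of parameter sequences $\alpha_i=\min\{j:A_{i,j}=1\}$ and $\beta_i=\max\{j:A_{i,j}=1\}$, both non-decreasing, and $G$ is not required to be bipartite.

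First I would fix the sentence to have the form $\forall u,v\,\Phi(u,v)$ over the vocabulary $\{\sim, U_1,\ldots,U_q\}$, augmented with new unary relation symbols $X_0,\ldots,X_q$ playing the role of the index-$i$ slices of the variable set $\bfx$ of Lemma~\ref{lem:BISeasy}. To force the output to be $0$ unless the input structure is an undirected graph I would include the clause $\neg(u\sim v)\vee v\sim u$; this is the only well-formedness clause needed, since in the reflexive setting there is no relation~$L$ and no bipartiteness constraint. The remaining clauses come directly from $\Psi(\bfx)=\Psi_V(\bfx)\wedge\Psi_E(\bfx)\wedge\Psi_L(\bfx)$: from $\Psi_V$ the clauses $(X_0(u))$, $\neg X_q(u)$ and $\IMP(X_j(u),X_{j-1}(u))$ for $1\le j\le q$; from $\Psi_E$, for each $1\le i\le q$, the clauses $\neg(u\sim v)\vee\IMP(X_{i-1}(u),X_{\alpha_i-1}(v))$ and $\neg(u\sim v)\vee\IMP(X_{\beta_i}(v),X_i(u))$ — since $A$ is symmetric and we quantify over all ordered pairs, these enforce the colour-compatibility constraint in both directions along each edge; and from $\Psi_L$, for each vertex and each forbidden colour~$i$, the clause $U_i(u)\vee\IMP(X_{i-1}(u),X_i(u))$.

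Next I would check the $\RHPi$ restriction. Each clause above contains at most one negated and at most one unnegated occurrence of a symbol from $\{X_0,\ldots,X_q\}$: $\IMP(X_j(u),X_{j-1}(u))=\neg X_j(u)\vee X_{j-1}(u)$ has exactly one of each; $(X_0(u))$ and $\neg X_q(u)$ have one; the edge clauses have one negated and one unnegated $X$ (the literal $\neg(u\sim v)$ is not an $X$); the list clauses likewise; and the well-formedness clause has no $X$ at all. Correctness is inherited from Lemma~\ref{lem:BISeasy}: when the structure given by $V$ and $\sim$ is an undirected graph~$G$ and $S_v=\{i:v\in U_i\}$, the interpretations of $X_0,\ldots,X_q$ satisfying $\Phi$ are exactly the satisfying assignments of $\Psi(\bfx)$, hence are in bijection with the list $H$-colourings of $(G,\bfS)$; when it is not an undirected graph the sentence has no model and the count is~$0$. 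Hence $\RPIlistHcol\in\RHPi$.

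I do not expect a genuine obstacle. The only non-bookkeeping ingredient — that the contiguity and monotonicity of $(\alpha_i),(\beta_i)$ make the chains of $\IMP$-clauses correctly encode adjacency in~$A$ — has already been discharged in Lemma~\ref{lem:BISeasy}, and the reflexive case is strictly simpler than the bipartite-permutation case of Corollary~\ref{cor:RHPi1} (no side relation~$L$, no parity constraint, a single vertex permutation). The one point warranting a sentence of care is confirming that dropping the $L(v)$ disjunct from the edge clauses is legitimate, i.e. that quantifying over all ordered pairs and using $A_{i,j}=A_{j,i}$ reproduces exactly the (a priori asymmetric) edge clauses of Lemma~\ref{lem:BISeasy}; this is immediate.
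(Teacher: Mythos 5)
Your proposal is correct and follows the same route as the paper, which simply notes that Corollary~\ref{cor:RHPi2} "follows directly from the proof of Lemma~\ref{lem:BISeasy}, in the same way as Corollary~\ref{cor:RHPi1}." You spell out exactly the intended translation — dropping the $L$-relation and bipartiteness clauses, using the single permutation and symmetric staircase matrix, and observing that symmetry of $A$ together with quantification over ordered pairs makes the edge clauses correct without the $L(v)$ disjunct — and correctly verify the $\RHPi$ syntactic restriction clause by clause.
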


\section{A counterexample}\label{sec:counterexample} 
The situation that we have studied in this paper is characterised by having
hard interactions between pairs of adjacent spins (a pair is either allowed
or it is disallowed) and hard constraints on individual
spins (again, a spin is either allowed at a particular vertex or it is disallowed).  
Our results apply both in the degree-bounded case and in the unbounded-degree case. 
In the latter case,
earlier work treated the situation with weighted interactions and weighted spins.  
The characterisations derived in these weighted scenarios (see, e.g. \cite[Thm~1]{PNAS})
have a similar feel to the trichotomy that we have presented in Theorem~\ref{thm:mainunbounded}.  
We may wonder whether, in the unbounded case, at least,  there is a common generalisation.  
That is, in the unbounded case, does the trichotomy of \cite{PNAS} survive if weights on spins are replaced by lists? 
The answer is no.  
There are examples of weighted spin systems with just $q=2$ spins whose partition
function is $\nSAT$-hard to approximate with vertex weights but efficiently 
approximable (in the sense that there is an FPRAS) with lists instead of weights.

Here is one such example.
Following Li, Lu and Yin~\cite{LiLuYinEarlier}, define the 
interaction matrix $A=(a_{ij}:0\leq i,j\leq1)$ 
by $A=(\begin{smallmatrix}0&1\\1&2\end{smallmatrix})$,
and the partition function associated with an instance~$G$ by 
\begin{equation}\label{eq:pf}
Z_A(G)=\sum_{\sigma:V(G)\to\{0,1\}}\,\,\prod_{\{u,v\}\in E(G)}a_{\sigma(u),\sigma(v)}.
\end{equation}
This is the partition function of a variant of the independent set model,
which  instead of defining 
the interaction between spin~$1$ and spin~$1$  
(two vertices that are out of the independent set) to be~$1$, defines this interaction 
weight to be~$2$.

Li, Lu and Yin \cite[Theorem 21]{LiLuYinEarlier} show that
Weitz's self-avoiding walk algorithm \cite{Weitz} gives an FPTAS for $Z_A(G)$.
Also,
Weitz's correlation decay algorithm \cite{Weitz}
can accommodate 
lists.  Indeed, the 
construction of the self-avoiding walk tree
relies on being able to ``pin'' colours at individual vertices.  So the partition
function ~(\ref{eq:pf}) remains easy to approximate (in the sense
that there is an  FPTAS) even in the presence of lists.  In contrast, the approximation problem becomes  $\nSAT$-hard if arbitrary 
weights are allowed.  
Indeed, by weighting 
spin~$0$ 
at each vertex $u\in V(G)$ 
by $2^{d(u)}$, where $d(u)$ is the degree of~$u$,
we recover the usual independent set partition function, which 
is  $\nSAT$-equivalent (Lemma~\ref{lem:zero}).  (The same fact can be read off from general results 
in many papers, including \cite[Thm~1]{PNAS}.)
Thus, even in the unbounded case,  the dichotomies presented in 
\cite[Thm 1]{PNAS} and \cite[Thm~6]{ApproxCSP}
do not hold with lists in place of weights.
So even in the unbounded-degree case, it was necessary to explicitly analyse  list homomorphisms  
in order to derive
precise characterisations  quantifying the problem of approximately counting these.

\section{Appendix:  characterisations of hereditary graph classes}\label{sec:app}

In this appendix we justify the graph-theoretic claims made in Section~\ref{sec:prelim},
by reference to the literature.

\begin{itemize}
\item {\it Excluded subgraphs for bipartite permutation graphs.}
This characterisation is due to K\"ohler \cite{Koehler}, building on work of Gallai~\cite{Gallai}.
A graph is said to be AT-free if it does not contain 
a certain kind of induced substructure known as an asteroidal triple.
The result we need follows from \cite[Lemma 1.46]{Koehler} together 
with the fact that a bipartite graph is AT-free if and only if it is a permutation graph.
See the discussion following \cite[Lemma 1.46]{Koehler} for a statement of the latter fact, 
together with a proof of the ``only if'' direction.
The ``if'' direction is not explicitly proved in \cite{Koehler} but can be 
deduced by comparing the excluded subgraphs in \cite[Fig.~1.6]{Koehler} 
and \cite[Fig.~1.16]{Koehler}, and noting that all the excluded subgraphs 
for bipartite AT-free graphs are also excluded subgraphs for bipartite permutation graphs.  

\item {\it Excluded subgraphs for proper interval graphs.}
This characterisation is due to Wegner~\cite{Wegner} and Roberts~\cite{Roberts}.
It follows from combining \cite[Thm 3(a,b)]{RobertsIndifference}  
with \cite[Thm 6(a,c)]{RobertsIndifference}.  

\item {\it Matrix characterisation for bipartite permutation graphs.}
This characterisation follows indirectly from those given by Spinrad et al.~\cite{SpinradEtAl}.
The equivalence is proved as Lemma 7 of \cite{DJM}.

\item {\it Matrix characterisation for proper interval graphs.}
Mertzios~\cite{Mertzios} shows that an $n$-vertex graph~$G$ is a proper interval graph 
if and only if it is the intersection graph of a set of $n$~intervals of a restricted form,
called the ``Stair Normal Interval Representation'' (SNIR) of~$G$.  
These intervals have distinct left endpoints in the set $\{0,1,\ldots,n-1\}$;
moreover, when the intervals are placed in order of increasing left endpoint,
the right endpoints are in monotone non-decreasing order. 
By~\cite[Lemma 3]{Mertzios}, any proper interval graph can be transformed to SNIR form,
after which the adjacency matrix (with ones on the diagonal) is in staircase form.  Although it is not mentioned explicitly
in~\cite{Mertzios}, it is clear that an adjacency matrix (with ones on the diagonal) in staircase form corresponds to an SNIR form by considering the intervals induced by the non-zero elements in the rows of the upper triangular portion of the matrix, and hence to a proper interval graph.
\end{itemize}

\bibliographystyle{plain}
\bibliography{\jobname}

\end{document}